\newtheorem{definition}{Definition}
\newtheorem{theorem}{Theorem}
\newtheorem{lemma}{Lemma}
\newtheorem{corollary}{Corollary}  
\newcommand{\toto}{xxx}
\newcommand{\tightparagraph}[1]{\smallskip \noindent \textbf{#1} \hspace{1ex}}
\newcounter{linecounter}
\newcommand{\linenumbering}{\ifthenelse{\value{linecounter}<10}{(0\arabic{linecounter})}{(\arabic{linecounter})}}
\renewcommand{\thelinecounter}{\ifnum \value{linecounter} > 9\else 0\fi \arabic{linecounter}}
\newcommand{\OO}{{\Omega}}
\newcommand{\HO}{{H\Omega}}
\newcommand{\AO}{{A\Omega}}
\newcommand{\HS}{{H\Sigma}}
\newcommand{\AS}{{A\Sigma}}
\newcommand{\HP}{{ \Diamond H\overline{P}}}
\newcommand{\AP}{{ \mathit{AP}}}
\newcommand{\NAP}{{ \overline{\mathit{AP}}}}
\newcommand{\C}{{\mathit{Correct}}}
\newcommand{\trusted}{{\mathit{h\_trusted}}}
\newcommand{\leader}{{\mathit{h\_leader}}}
\newcommand{\multiplicity}{{\mathit{h\_multiplicity}}}
\newcommand{\quora}{{\mathit{h\_quora}}}
\newcommand{\labels}{{\mathit{h\_labels}}}
\title{\bf  Failure Detectors in Homonymous Distributed Systems\\ 
(with an Application to Consensus)}}
\author{
           Sergio   {\sc Ar\'evalo}$^{\star}$ 
           ~ Antonio  {\sc Fern\'andez Anta}$^{\star\star}$ 
           ~ Damien   {\sc Imbs}$^{\ddag}$ \\
           ~ Ernesto  {\sc Jim\'enez}$^{\star}$
           ~ Michel   {\sc Raynal}$^{{\dag},{\ddag}}$ \\~\\
$^{\star}$ EUI, Universidad Polit\'ecnica de Madrid, 28031 Madrid, Spain\\
$^{\star\star}$ Institute IMDEA Networks, 28918 Madrid, Spain\\
$^{\dag}$  Institut Universitaire de France\\
$^{\ddag}$ IRISA, Campus de Beaulieu, 35042 Rennes Cedex, France \\
%
}
\date{}
\begin{document}

\maketitle

\vspace{-0.5cm}
\begin{abstract}
This paper is on homonymous  distributed 
systems where processes are prone to crash failures and have no initial 
knowledge of the  system membership (``homonymous'' means that several 
processes  may have the same identifier).  
New classes of failure detectors suited to these systems are first defined.
Among them,  the classes $\HO$ and $\HS$ are introduced 
that are the homonymous counterparts of the classes $\Omega$ and  $\Sigma$, 
respectively. (Recall that the pair $\langle \Omega,\Sigma\rangle$ 
defines the weakest failure detector to solve consensus.)
Then, the paper  shows how $\HO$ and $\HS$ can be implemented in homonymous 
systems without membership knowledge (under different synchrony 
requirements). Finally,  two algorithms are presented that use these 
failure detectors to solve consensus in homonymous asynchronous systems
where there is no initial knowledge of the membership. One algorithm 
solves consensus with $\langle \HO,\HS\rangle$, while the other 
uses only $\HO$, but needs a majority of correct processes. 

Observe that the systems with unique identifiers and anonymous 
systems are extreme cases of homonymous systems from which follows that
 all these results also apply to these systems.  Interestingly, the new
failure detector class $\HO$ can be implemented with partial synchrony, 
while the analogous class $\AO$ defined for anonymous systems 
can not be implemented (even in synchronous systems). Hence, the paper
provides us with the first proof showing that consensus can be solved in 
anonymous systems with only partial synchrony 
(and a majority of correct processes).
\end{abstract}
\begin{IEEEkeywords}
Agreement problem, Asynchrony, Consensus,  Distributed computability, 
failure detector, Homonymous system, 
Message-passing, Process crash.  
\end{IEEEkeywords}

\section{Introduction}

\tightparagraph{Homonymous systems}
Distributed computing is on mastering uncertainty created by adversaries. 
The  first  adversary  is  of  course  the fact  that  the  processes are
geographically distributed which makes impossible to instantaneously obtain a
global state of the system. An  adversary can be static (e.g., synchrony 
or anonymity) or dynamic (e.g., asynchrony, mobility, etc.).  The  net 
effect of asynchrony and failures is the  most studied pair of adversaries.  
 
This paper is on agreement in  crash-prone  message-passing  distributed
systems. While  this topic  has been deeply  investigated in  the past
in the context of asynchrony and process failures 
(e.g., \cite{
L96,R10}), we additionally consider here that 
several  processes can have  the  same identity,  i.e.,  the  
additional static adversary  that is {\it homonymy}. 
A motivation for homonymous processes in distributed 
systems can be found in \cite{DFGKRT11} where, for example, 
users keep their privacy taking their domain as their identifier 
(the same identifier is then
assigned to all the users of the same domain). Observe that 
homonymy is a generalization
of two cases: (1) having unique identifiers and 
(2) having the same identifier for all the processes (anonymity), 
which are the two extremes of homonymy.

We also assume  that the distributed system has to face
another static adversary, which is the fact that, initially, 
each process only 
knows its own identity. We say that the system has to work 
\emph{without initial knowledge of the membership}. 
This static adversary has been recently identified 
as of significant relevance in certain distributed contexts
\cite{DBLP:journals/ipl/JimenezAF06}.

\tightparagraph{How to face adversaries}
It is well-known that lots of problems cannot be solved in 
presence of some adversaries 
(e.g., \cite{DBLP:conf/stoc/Angluin80,DBLP:journals/jacm/AttiyaSW88,
DBLP:journals/jacm/FischerLP85,DBLP:journals/tpds/YamashitaK96}). 
When  considering  process  crash  failures,  the  
{\it  failure  detector} approach introduced in 
\cite{DBLP:journals/jacm/ChandraHT96,DBLP:journals/jacm/ChandraT96} 
(see  \cite{R09}   for an 
introductory presentation) has  proved to be very attractive.  
It allows to enrich an otherwise too poor distributed  system 
to solve a given problem $P$,  in order to obtain a more powerful system 
in which $P$ can be solved. 

A failure detector is a  distributed oracle  that provides processes with
additional information related to  failed processes, and can consequently be
used to enrich the computability power of asynchronous  send/receive 
message-passing systems.
According to the  type (set  of process  identities, integers,
etc.) and  the quality  of  this  information,  several failure  detector
classes have been proposed. 
We refer the reader to  \cite{R10} where classes of 
failure detectors  suited to  agreement and communication
problems, corresponding failure detector-based algorithms, and 
additional behavioral assumptions that (when satisfied) allow 
these failure detectors to be implemented are presented.
It is interesting to observe that none of the original failure detectors
introduced in \cite{DBLP:journals/jacm/ChandraT96} can be implemented
without initial knowledge of the 
membership \cite{DBLP:journals/ipl/JimenezAF06}.

\tightparagraph{Aim of the paper}
Agreement problems are  central as soon as one wants  to capture the essence
of distributed computing. (If processes do not have to agree in one way
or another,  the problem we  have to solve  is not a  distributed computing
problem!)  
The  aim  of  this  paper   is  consequently  to  understand  the  type  of
information  on failures   that  is needed  when   one has    to solve   an
agreement  problem in  presence of asynchrony, process crashes,  
homonymy, and lack of initial knowledge of the membership. 
As consensus is the most central agreement problem we focus on it.

\tightparagraph{Related work}
As far as we know, consensus in anonymous networks has been addressed 
first in~\cite{DBLP:conf/wdag/BonnetR09,DFT09} 
(\cite{DFT09} considers different synchrony assumptions while 
\cite{DBLP:conf/wdag/BonnetR09} considers systems enriched with
failure detectors). Connectivity requirements for agreement 
in anonymous networks is addressed in~\cite{GT07}.

To the best of  our knowledge, up to now agreement  in  homonymous systems 
has been addressed  only  in \cite{DFGKRT11} and \cite{janus-opodis2011}.
In the former paper the  authors consider
that,  among the  $n$ processes,  up to  $t$ of  them can  commit Byzantine
failures.  The system is homonymous in the sense that there are $\ell$, 
$1 \leq \ell \leq n$,  different authenticated identities,  each process 
has one identity, and several processes can share the same identity. 
It is  shown in that paper that  $\ell >3t$  and  $\ell > \frac{3t+n}{2}$ 
are   necessary and sufficient conditions for solving consensus in synchronous 
systems and  partially synchronous systems, respectively.
The latter paper \cite{janus-opodis2011} mainly explores consensus
in a shared memory system with anonymous processes, and bounds
the complexity (namely, individual write and step complexities) of
solving consensus with the aid of an anonymous leader elector $\AO$ 
(see below).
They show that if the system is homonymous instead of purely anonymous
these bounds can be improved.

The consensus problem in anonymous asynchronous crash-prone message-passing 
systems  has been  recently addressed  in \cite{DBLP:conf/wdag/BonnetR09} 
(for the first time to our knowledge).  
In such systems, processes have no identity at 
all\footnote{They must also execute the same program, 
because otherwise they could use the program
(or a hash of it) as their identity. 
We consider that it is the same if processes have no identity 
or they have the same identity for all processes, 
since a process that lacks an identity can choose
a default value (e.g., $\bot$) as its identifier.}.
This paper introduces an  anonymous counterpart\footnote{In this paper,
when we say that a failure detector $A$ is the  {\it counterpart} 
of  a  failure   detector  $B$  we  mean  that,   in  a classical  
asynchronous system  (i.e., where each process  has its own identity) 
enriched    with  a  failure  detector of  class $A$,  it  is  possible  to
design  an  algorithm that builds  a failure detector  of the class  $B$
and vice-versa  by exchanging  $A$ and $B$.  Said differently, $A$  and $B$
have the same computability power in a classical crash-prone asynchronous 
system.}  (denoted $\NAP$ later in \cite{DBLP:conf/wdag/BonnetR10})  
of  the   perfect  failure detector  $P$
introduced in \cite{DBLP:journals/jacm/ChandraT96}. A failure detector
of class $\NAP$ returns an upper bound (that eventually becomes tight)
of the current number of alive processes. The paper then
shows that  there  is  an inherent  price  associated with anonymous
consensus,  namely,  while the lower bound on the number of rounds in 
a non-anonymous system enriched with $P$  is $t+1$ (where $t$ is 
the maximum number of faulty processes), it is $2t+1$  
in an anonymous system enriched with  $\NAP$.
The algorithm proposed assumes knowledge of the parameter $t$.

More general failure detectors suited to anonymous  distributed systems are
presented in \cite{DBLP:conf/wdag/BonnetR10}. 
Among other results,  this paper introduces the anonymous counterpart $A\Sigma$
of the quorum  failure detector  class 
$\Sigma$ \cite{DBLP:journals/jacm/Delporte-GalletFG10} and the 
anonymous counterpart $A\Omega$ of the eventual leader  failure detector 
class $\Omega$ \cite{DBLP:journals/jacm/ChandraHT96}. It also presents
the failure detector class $\AP$ which is the complement of $\NAP$.
An important result of \cite{DBLP:conf/wdag/BonnetR10} 
is the fact that  relations linking  failure detector classes 
are  not the same in non-anonymous systems   and anonymous systems. 
This is also the case
if processes do not know the number $n$ of processes in the system 
(unknown membership in anonymous systems). 
If $n$ is unknown, the equivalence between
$\mathit{AP}$ and $\overline{\mathit{AP}}$, 
shown in \cite{DBLP:conf/wdag/BonnetR10}, does not hold anymore.

Regarding implementability, it is stated 
in \cite{DBLP:conf/wdag/BonnetR10} that $\AO$ is not \emph{realistic}
(i.e., it can not be implemented in an anonymous  synchronous system 
\cite{DBLP:conf/dsn/Delporte-GalletFG02}). If the membership is
unknown, it is not hard to show that $\AP$ is not realistic either, 
applying similar techniques as those in
\cite{DBLP:journals/ipl/JimenezAF06}. On the other hand, 
while $\NAP$ can be implemented in an anonymous  
synchronous system, it is easy to show that it cannot be 
implemented in most partially synchronous systems 
(e.g., in particular, in those with all links eventually timely).

\tightparagraph{Contributions}
As mentioned, we explore the consensus problem in homonymous systems. 
Additional adversaries considered are
asynchrony, process crashes, and lack of initial knowledge of the membership. 
We can summarize the main contributions of this paper as follows.

First, the paper defines new classes  of failure detectors suited  
to homonymous
systems. These classes, denoted $\HO$ and  $\HS$, 
are shown to be homonymous counterparts of  $\Omega$ and  $\Sigma$, 
respectively.
The interest on the latter classes is motivated by the fact that  
$\langle \Sigma,\Omega\rangle$ is the weakest failure detector 
to solve consensus in crash prone asynchronous message-passing systems 
for any  number of  process failures  
\cite{DBLP:journals/jacm/Delporte-GalletFG10}.  
The paper also investigates the relations linking $\mathit{H\Sigma}$, 
$\mathit{A\Sigma}$ and  $\Sigma$, and shows that both 
$\HO$ and  $\HS$ can be obtained
from $\NAP$ in asynchronous anonymous systems. 
As a byproduct, we also introduce a new
failure detector class denoted  $\HP$,
that is the homonymous counterpart of  $\Diamond\overline{P}$
(the complement of   $\Diamond P$  \cite{DBLP:journals/jacm/ChandraT96}),
which we consider of independent interest.

Then, the paper explores the implementability of these classes 
of failure detectors.
It presents an implementation of  $\HP$ in  homonymous message-passing  
systems  with  partially synchronous processes  and eventually timely  links.
This algorithm does  not require that the processes know the system
 membership. 
Since $\HO$ can be trivially implemented from $\HP$ without communication,
$\HO$ is realistic and can also be implemented in a partially synchronous 
homonymous system without membership knowledge. The paper also presents 
an implementation of  $\HS$
in  a  synchronous homonymous message-passing  system
without membership knowledge.

Finally, the paper presents two consensus algorithms for asynchronous 
homonymous systems enriched with $\HO$. Both algorithms are derived
from consensus algorithms for anonymous systems proposed in 
\cite{DBLP:conf/aina/BonnetR10} and
\cite{DBLP:conf/wdag/BonnetR10}, respectively. 
The main challenge, and hence, the main contribution
of our algorithms, is to modify the original algorithms that 
used $\AO$ to use $\HO$ instead.
In the second algorithm, also the use of $\AS$ has been 
replaced by the use of $\HS$.

The first algorithm assumes that each process knows the value $n$
and that a majority of processes is correct in 
all executions\footnote{The knowledge of $n$ can be replaced 
by the knowledge of a
parameter $\alpha$ such that, $\alpha>n/2$ and, in all executions, 
at least $\alpha$ processes are correct.}. Since, as mentioned,
$\HO$ can be implemented with partial synchrony, the combination of the
algorithms presented (to implement $\HO$ and to solve consensus with $\HO$)
form a distributed algorithm that solves consensus in any homonymous system
with partially synchronous processes, eventually timely  links, and 
a majority of correct processes. Applied to anonymous systems,
this result relaxes the known conditions to solve consensus, since previous
algorithms were based on unrealistic failure 
detectors ($\AO$) or failure  detectors that require
a larger degree of synchrony ($\NAP$).

The second consensus algorithm presented works for any number of process 
crashes, and does not need to know $n$, but assumes that 
the system is  enriched with the pair of failure detectors 
$\langle \HS, \HO\rangle$.
This algorithm, combined with the algorithms to implement 
$\HS$ and $\HO$, shows that the consensus problem 
can be solved in \emph{synchronous} homonymous systems subject to
any number of crash failures without the initial knowledge neither of 
the parameter $t$  nor  of the membership.
Applied to anonymous systems, this result relaxes the known conditions to 
solve consensus under any number of failures, 
since previous algorithms used unrealistic detectors
($\AO$) or required to know $t$ or an upper bound on it.


This second consensus algorithms also forces us to restate 
the conjecture of which could be the weakest failure detector to solve
consensus in asynchronous anonymous systems. The algorithm solves 
consensus in anonymous systems with a pair of detectors 
$\langle \HS, \HO\rangle$,
and we describe how it can be modified to solve consensus 
with a pair $\langle \HS, \AO\rangle$. Additionally, as mentioned,
it is shown here that $\HS$ can be obtained from $\AS$, and both
$\HS$ and $\HO$ can be obtained from $\NAP$. 
The conjecture issued in \cite{DBLP:conf/wdag/BonnetR10} was that
$\langle \AS, \AO\rangle \oplus \NAP$ \footnote{$\oplus$ represents 
a form of composition in which the resulting failure detector 
outputs $\bot$ for a finite time until it behaves at all processes 
as  one -and the same- of the two detectors  that are combined.} 
could be the weakest failure detector. 
Then, using the same algorithm described in \cite{DBLP:conf/wdag/BonnetR10} 
to combine the
consensus algorithms for $\langle \HS, \AO\rangle$ and  
$\langle \HS, \HO\rangle$, the new candidate  to be the 
 weakest failure detector for consensus despite anonymity is  now 
$\langle \HS, \AO\rangle \oplus \langle \HS, \HO\rangle$.

\tightparagraph{Roadmap}
The   paper  is   made   up  of   \ref{sec:FDs-based-consensus}  sections.   
Section \ref{sec:model} presents the system model. Section 
\ref{sec:FDs-defintions} introduces failure  detector classes suited 
to homonymous
systems, and explores their relation with other classes and their 
implementability.
Finally, Section \ref{sec:FDs-based-consensus} presents  failure 
detector-based homonymous consensus algorithms.


\section{System Model}
\label{sec:model} 

\tightparagraph{Homonymous processes} 
Let $\Pi$ denote the set of processes with  $|\Pi|=n$.
 We use  $id(p)$ to denote the identity  of process  $p \in \Pi$.  
Different processes may  have the
same  identity,  i.e. $p  \neq  q  \nRightarrow  id(p) \neq  id(q)$.  Two
processes with the same identity are said to be \emph{homonymous.}   
Let $S \subseteq  \Pi$ be any subset of processes. We  define $I(S)$ as the
\emph{multiset} (sometimes also called {\it bag}) of process identities  
in $S$,  $I(S) =  \{id(p) :  p \in S\}$. Let us remember that, differently
from a set, an element of a multiset can appear more than once. 
Hence, as  $I(S)$ may contain several  times the same   identity, 
we always have $|I(S)|=|S|$.  
The  \emph{multiplicity}   (number  of   instances)  of
identity    $i$   in    a    multiset   $I$ is  denoted   
$\mathit{mult}_I(i)$. When $I$ is clear from the context we will use simply
$\mathit{mult}(i)$.     $P(I) \subseteq  \Pi$ is used to denote
the  processes whose  identity is in the multiset $I$, i.e., 
$P(I)=\{p: p  \in \Pi \wedge id(p) \in I\}$. 
Every  process  $p \in \Pi$  knows  its   own identity
$id(p)$. Unless  otherwise stated, a process  $p$ does not  know the system
membership $I(\Pi)$, nor the system size $n$, nor any upper bound $t$ on the 
number of faulty processes. Observe that the set $\Pi$ is
a formalization tool that is not known by the set of processes of the system.

Processes are  asynchronous, unless otherwise  stated. We assume  that time
advances at discrete  steps. We assume a global clock  whose values are the
positive natural numbers, but processes cannot access it. 
Processes can  fail by  crashing, i.e.,   stop taking steps. 
A process that  crashes in a run is said to  be \emph{faulty} and a
process that is not  faulty in a run is said to  be \emph{correct}. 
The set of correct processes is denoted by $\C \subseteq \Pi$.

\tightparagraph{Communication}
The processes can invoke the primitive $broadcast(m)$ to send a message
$m$ to all  processes of the system (including  itself). This communication
primitive is modeled in the following way. The network is assumed to have a
directed link  from process $p$ to  process $q$ for each  pair of processes
$p,q  \in  \Pi$  ($p$ does  not  need  to  be  different from  $q$).  Then,
$broadcast(m)$ invoked at  process $p$ sends one copy  of message $m$ along
the link  from $p$ to $q$, for  each $q \in \Pi$.  Unless otherwise stated,
links are asynchronous and reliable,  i.e., links neither lose messages nor
duplicate messages nor corrupt  messages nor generate spurious messages. If
a process crashes while broadcasting a message, the message is received
by an arbitrary  subset of processes.

\tightparagraph{Notation and  time-related definitions}
The  previous  model is  denoted $\mathit{HAS}[\emptyset]$  (Homonymous
Asynchronous System). 
We  use  $\mathit{HPS}[\emptyset]$  to  denote a  homonymous  system  where
processes  are partially  synchronous and  links are  eventually  timely. 
A
process is  \emph{partially synchronous} if the  time to execute  a step is
bounded, but  the bound is unknown.  A link is  \emph{eventually timely} if
there  is  an unknown  global  stabilization  time  (denoted $GST$)
 after  which  all
messages sent  across the  link are delivered  in a bounded  $\delta$ time,
where  $\delta$ is  unknown.  Messages sent  before  $GST$ can  be lost  or
delivered after an arbitrary (but finite) time. 

$AS[\emptyset]$ denotes the  classical asynchronous  system with
unique identities  and  reliable channels. Finally, $AAS[\emptyset]$ denotes
the  Anonymous  Asynchronous   System model  \cite{DBLP:conf/wdag/BonnetR10}.  
Observe  that  $AS[\emptyset]$  and $AAS[\emptyset]$  are  special  cases
(actually extreme cases with respect to homonymy) of 
$\mathit{HAS}[\emptyset]$ (an anonymous system can be seen 
as a homonymous system where
all  processes  have the  same default identifier $\bot$).

\section{Failure Detectors}
\label{sec:FDs-defintions}

In this section we define failure detectors previously proposed and the 
ones proposed here for homonymous systems. Then, relationships between 
these detectors are derived, and their implementability is explored.

\tightparagraph{Failure detectors for classical and anonymous systems}
We briefly describe here some failure detector previously proposed. 
We start with the classes that have been defined for $AS[\emptyset]$.

\emph{A failure detector of class $\Sigma$} 
\cite{DBLP:journals/jacm/Delporte-GalletFG10} 
provides each process  
$p \in \Pi$ with a variable $trusted_p$ which contains a set of process 
identifiers. The properties that are satisfied by these sets are 
[Liveness] $\forall p \in \C , \exists \tau \in N: 
\forall \tau' \ge \tau , $ $trusted_p^{\tau'} \subseteq I(\C)$, and
[Safety] $\forall p, q \in \Pi, \forall \tau, \tau' \in N,
 trusted_p^{\tau} \cap trusted_q^{\tau'} \not=\emptyset$.
 
\emph{A failure detector of class $\Omega$}
 \cite{DBLP:journals/jacm/ChandraHT96}
provides each process  $p \in \Pi$ with a variable $leader_p$ such that 
[Election] eventually all these variables contain the same process 
identifier of a correct process.

The following failure detector classes have been defined for
 anonymous systems $AAS[\emptyset]$.
 
\emph{A failure detector of class $\AO$} \cite{DBLP:conf/wdag/BonnetR10}
provides each process $p \in \Pi$ with a variable $a\_leader_p$, such 
that [Election] there is a time after which, permanently, (1) there is 
a correct process whose Boolean variable is true,
 and (2) the Boolean variables of the other correct processes are false.
 
\emph{A failure detector of class $\NAP$} \cite{DBLP:conf/wdag/BonnetR09}
provides each process  $p \in \Pi$ with a variable $anap_p$ such that, 
if $anap_p^{\tau}$ and $\C^{\tau}$ denote the value of this variable and 
the number of alive processes at time $\tau$, respectively, then [Safety]
 $\forall p \in \Pi, \forall \tau \in N, anap_p^{\tau} \ge |\C^{\tau}|$, 
and [Liveness]
 $\exists  \tau \in N, \forall p \in \C, 
\forall \tau' \geq \tau, anap_p^{\tau'}=|\C|$.

\emph{A failure detector of class $\AS$} \cite{DBLP:conf/wdag/BonnetR10}
provides each process  $p \in \Pi$ with a variable $a\_sigma_p$ that 
contains a set of pairs of the form $(x,y)$. The parameter $x$ is a 
label provided by the failure detector, and $y$ is an integer. 
Let us denote $a\_sigma_p^\tau$ the value of variable $a\_sigma_p$ at 
time $\tau$. 
Let $S_A(x)=\{p \in \Pi \ |\ \exists \tau \in N : (x,-) \in a\_sigma_p^\tau\}$.
Any failure detector of class $\AS$ must satisfy the following properties: 

\begin{itemize}
\item
Validity.  No set $a\_sigma_p$ ever contains simultaneously two pairs 
with the same label. 
\item Monotonicity. 
$ \forall p \in \Pi , \forall \tau \in N : 
(((x,y) \in a\_sigma_p^{\tau}) \implies 
(\forall \tau' \ge \tau: \exists y' \leq y : (x,y') \in a\_sigma_p^{\tau'})$.
\item
Liveness.
 $\forall p \in \C , \exists \tau \in N: 
  \forall \tau' \ge \tau : \exists (x,y) \in a\_sigma_p^{\tau'} : 
   (|S_A(x) \cap  \C| \ge y)$.
\item
Safety.
$\forall p_1, p_2 \in \Pi , \forall \tau_1, \tau_2 \in N,  
\forall (x_1,y_1) \in a\_sigma_{p_1}^{\tau_1} : 
\forall (x_2,y_2) \in a\_sigma_{p_2}^{\tau_2}: 
 \forall T_1 \subseteq S_A(x_1): \forall T_2 \subseteq S_A(x_2) : 
((|T_1| = y_1) \wedge (|T_2| = y_2)) \implies (T_1 \cap T_2 \not= \emptyset)$.
\end{itemize}

\tightparagraph{Failure detectors for homonymous systems}
Classical failures detectors output a set of processes' identifiers. 
Our failures detectors extend this output to 
a multiset of processes' identifiers, due to the homonymy nature of the system.
The following are the new failure detectors proposed for homonymous systems. 

\emph{A failure detector of class $\HP$} eventually 
outputs forever the multiset with the 
identifiers of the correct processes. More formally,
a failure detector of class $\HP$ provides each process  $p \in \Pi$ 
with a variable $\trusted_p$, such that [Liveness]
$\forall p \in \C, \exists \tau \in N  : \forall \tau' \ge \tau$, 
$\trusted_p^{\tau'} = I(\C)$. This failure detector $\HP$ is the 
counterpart of $\Diamond\overline{P}$.

\emph{A failure detector of class $\HO$} 
eventually outputs the same identifier $\ell$
and number $c$ at all processes, such that $\ell$ is the identifier of 
some correct process, and $c$ is the number of correct processes that 
have this identifier $\ell$. More formally,
a failure detector of class $\HO$ provides each process  $p \in \Pi$ 
with two variables $\leader_p$ and $\multiplicity_p$, such that 
[Election] $\exists \ell \in I(\C), \exists \tau \in N  : 
\forall \tau' \ge \tau, \forall p \in \C$, $\leader_p^{\tau'}=\ell$, 
and $\multiplicity_p^{\tau'}=\mathit{mult}_{I(\C)}(\ell)$.

Any correct process $p$ such that $id(p)=\ell$ is called a \emph{leader}.
Note that this failure detector does not choose only one leader, like 
in $\Omega$ or in $\AO$, but a set of leaders with the same identifier. 
When all identifiers are different, the class $\HO$ is equivalent to $\OO$. 
Furthermore, a failure detector of class $\HO$ can be obtained from any 
detector $D$ of class $\HP$ without any communication (for instance, 
setting at each process $p$ periodically $\leader_p$ to the smallest 
element in $D.\trusted_p$, and 
$\multiplicity_p \leftarrow \mathit{mult}_{D.\trusted_p}(\leader_p)$).


\emph{A failure detector of class $\HS$} 
provides each process $p \in \Pi$ 
with two variables $\quora_p$ and $\labels_p$, where $\quora_p$ is a set 
of pairs of the form $(x,m)$ ($x$ is a label, and $m$ is a multiset such 
that $m \subseteq I(\Pi)$) and $\labels_p$ is a set of labels. 
Roughly speaking, each pair $(x,m)$ determines a set of quora, and the set 
$\labels_p$ of a process $p$ determines in which of these sets it 
participates. More formal,
let us denote $\quora_p^{\tau}$ and  $\labels_p^{\tau}$ the values of 
variables $\quora_p$ and $\labels_p$ at time $\tau$, respectively.
Let $S(x)=\{p \in \Pi \ |\ \exists \tau \in N : x \in \labels_p^\tau\}$.
Any failure detector of class $\HS$ must satisfy the following properties: 
\begin{itemize}
\item
Validity.  No set $\quora_p$ ever contains simultaneously two pairs 
with the same label.
\item
Monotonicity.
$ \forall p \in \Pi , \forall \tau \in N, \forall \tau' \ge \tau$: 
(1) $\labels_p^{\tau} \subseteq \labels_p^{\tau'}$, and 
(2) $((x,m) \in \quora_p^\tau) \implies  
\exists m' \subseteq m : (x,m') \in \quora_p^{\tau'} $.
\item
Liveness. $\forall p \in \C , \exists \tau \in N: \forall \tau' \ge \tau , 
\exists (x,m) \in \quora_p^{\tau'} : m \subseteq I(S(x) \cap  \C)$.
\item
Safety. $\forall p_1, p_2 \in \Pi , \forall \tau_1, \tau_2 \in N,  
\forall (x_1,m_1) \in \quora_{p_1}^{\tau_1} : 
\forall (x_2,m_2) \in \quora_{p_2}^{\tau_2}:
 \forall Q_1 \subseteq S(x_1), \forall Q_2 \subseteq S(x_2), 
(I(Q_1) = m_1 \wedge I(Q_2) = m_2) \implies (Q_1 \cap Q_2 \not= \emptyset)$.
\end{itemize}
Comparing $\HS$ and $\AS$, one can observe that $\HS$ has pairs $(x,m)$ 
in which $m$ is a multiset of identifiers, while $\AS$ uses pairs $(x,y)$ 
in which $y$ is an integer. However,
a more important difference is that, in $\HS$, each process has two variables.
 Then, the labels that a process $p$ has in $\quora_p$ can be disconnected 
from those it has in
$\labels_p$. This allows for additional flexibility in $\HS$.

\tightparagraph{Reductions between failure detectors}
\label{red-FD}
In this section we claim that it can be shown, via reductions, 
the relation of the newly defined failure detector classes with 
the previously defined classes.
We use the standard form of comparing the relative power of failure 
detector classes of \cite{DBLP:journals/jacm/ChandraT96}.
A failure detector class $X$ is \emph{stronger} than class $X'$ in 
system $Y[\emptyset]$ if there is an algorithm $A$ that emulates the 
output of a failure detector of class $X'$ in $Y[X]$ 
(i.e., system $Y[\emptyset]$ enhanced with a failure detector $D$ 
of class $X$). We also say that $X'$ can be obtained from $X$ 
in $Y[\emptyset]$. Two classes are equivalent if this property 
can be shown in both directions.

We only present here the main results. 
The proofs and additional details can be found in the Appendix.
The first result shows that, 
in classical systems with unique identifiers, $\Sigma$, $\HS$, 
and $\AS$ are equivalent.

\begin{theorem}
\label{thm:equivalent}
Failure detector classes $\Sigma$, $\HS$, and $\AS$ are equivalent 
in $AS[\emptyset]$. Furthermore,
the transformations between $\Sigma$ and $\HS$ do not require 
initial knowledge of the membership.
\end{theorem}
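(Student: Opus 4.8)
The plan is to establish the three equivalences through a cycle of reductions: $\Sigma \Rightarrow \HS \Rightarrow \Sigma$ for the membership-free part, and then $\HS \Leftrightarrow \AS$ (equivalently $\Sigma \Leftrightarrow \AS$) to fold $\AS$ into the same equivalence class. The single structural fact that drives everything is that in $AS[\emptyset]$ identities are unique, i.e. $id(\cdot)$ is injective; hence $I(\cdot)$ is injective on subsets of $\Pi$, a multiset $m \subseteq I(\Pi)$ is really a set of identifiers, a process set $Q$ is completely determined by $I(Q)$, and for any identifier set $T$ the only possible witness with identifier set $T$ is $P(T)$. Consequently the multiset quora of $\HS$, the cardinality quora of $\AS$, and the identifier-set quora of $\Sigma$ all describe the same objects, and each reduction amounts to translating the intersection and liveness clauses of one detector into those of another.

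For $\Sigma \Rightarrow \HS$ (without membership knowledge) I would have each process $p$ repeatedly broadcast its current $trusted_p = T$ tagged with a fresh, globally unique label $x = (id(p),k)$; a process $q$ that receives such a message and finds $id(q) \in T$ inserts $x$ into $\labels_q$, while $p$ permanently adds the pair $(x, I(T))$ to $\quora_p$. Validity holds because labels are pairwise distinct, Monotonicity because $\labels$ and $\quora$ only grow and each retained pair keeps its multiset, and Liveness because once $p$ is past the $\Sigma$-stabilization time it announces some $T^\ast \subseteq I(\C)$: every identifier of $T^\ast$ belongs to a correct process, which receives the (reliable) broadcast and joins $S(x^\ast)$, so $I(T^\ast) \subseteq I(S(x^\ast) \cap \C)$. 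Safety is the crux: given realizable witnesses $Q_1 \subseteq S(x_1)$, $Q_2 \subseteq S(x_2)$ with $I(Q_1)=I(T_1)$ and $I(Q_2)=I(T_2)$, injectivity of $I$ forces $Q_1 = P(T_1)$ and $Q_2 = P(T_2)$, whence $Q_1 \cap Q_2 = P(T_1 \cap T_2) \neq \emptyset$ by $\Sigma$-safety.

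For $\HS \Rightarrow \Sigma$ (without membership knowledge) I would let every process gossip its $\labels$ set, so that each $p$ maintains, for every label $x$, the set $R_p(x)$ of identifiers reported to carry $x$ (an under-approximation of $I(S(x))$). Process $p$ outputs as $trusted_p$ the identifier set of a pair $(x,m) \in \quora_p$ that is \emph{confirmed}, meaning $m \subseteq R_p(x)$; confirmation guarantees $P(m) \subseteq S(x)$, i.e. that $m$ is realizable, so two confirmed outputs $(x_1,m_1)$ and $(x_2,m_2)$ supply realizable witnesses $P(m_1),P(m_2)$ and $\HS$-safety yields $m_1 \cap m_2 \neq \emptyset$. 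Liveness follows because an eventually-good pair has $m \subseteq I(S(x) \cap \C)$, all of whose (correct) members are eventually heard from, so the pair becomes confirmed with $trusted_p \subseteq I(\C)$. Closing the cycle with $\AS$ is then analogous: one translates $(x,m) \mapsto (x,|m|)$ and, in the reverse direction, rebuilds an identifier multiset of the prescribed cardinality from the participants actually known, again using injectivity of $id(\cdot)$.

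The step I expect to be the main obstacle is guaranteeing the \emph{Safety} of $\Sigma$ at \emph{every} time in the $\HS \Rightarrow \Sigma$ direction, since, unlike Liveness, it is not an eventual property: a process must never expose an unconfirmed (hence possibly non-intersecting) quorum, and one must specify what it outputs during the bootstrap interval before any confirmed pair exists and across the transitions forced by $\HS$-Monotonicity, where $m$ may shrink. A related technical point arises once $\AS$ enters the cycle: its participation set $S_A(x)$ is defined through the $a\_sigma$ output, whereas the $\HS$ set $S(x)$ is defined through $\labels$, so the translations must re-synchronize these two notions of membership. This is where the unique-identifier structure, and for some directions knowledge of the membership, is genuinely used, consistent with the theorem restricting the membership-free claim to the pair $\Sigma$ and $\HS$.
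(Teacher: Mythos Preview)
Your overall decomposition matches the paper's: prove $\Sigma \Leftrightarrow \HS$ directly (membership-free) and fold $\AS$ in separately. For $\AS$ the paper simply invokes the known equivalence $\Sigma \Leftrightarrow \AS$ from Bonnet--Raynal rather than routing through $\HS$, so nothing new is needed there.

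Your $\Sigma \Rightarrow \HS$ construction is correct; the paper's variant is a bit simpler in that it uses the set $q = D.trusted_p$ itself as the label, inserting $(q,q)$ into $\quora_p$, while each process $r$ puts into $\labels_r$ every subset of its locally-learned membership that contains $id(r)$. This makes $S(q) \subseteq P(q)$ automatic without manufacturing fresh per-round labels.

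The genuine gap is in your $\HS \Rightarrow \Sigma$ direction, and it concerns the \emph{opposite} property from the one you flagged. Safety does follow once you restrict to confirmed pairs, exactly as you argue. The problem is Liveness: the eventually-good pair becoming confirmed does not force $p$ to output \emph{that} $m$. A faulty process may have gossiped its labels before crashing, so there can be permanently-confirmed pairs $(x',m')$ with $m' \not\subseteq I(\C)$; if your rule ``output the identifier set of \emph{a} confirmed pair'' keeps selecting one of those, $\Sigma$-Liveness fails forever. The paper's missing ingredient is a selection mechanism: it first implements, purely in $AS[\emptyset]$ via heartbeat recency, an auxiliary detector class $\Xi$ whose output $alive_p$ is a list in which, eventually, every correct identifier has rank at most $|\C|$; among all confirmed candidates $m$ it then outputs one minimizing $\max_{i \in m} rank(i, alive_p)$. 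Once $\Xi$ has stabilized, any minimizer contains only correct identifiers, which yields Liveness. This selection step---not the bootstrap issue you raised---is the substantive obstacle in the reduction.
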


%

In anonymous systems we have the following properties. Recall that 
an anonymous system is assumed to be a homonymous system 
in which every process has a default identifier $\bot$\footnote{Note that this
differs from the assumption used in \cite{DBLP:conf/wdag/BonnetR10}.}.

\begin{theorem}
\label{thm:astohs}
Class $\HS$ can be obtained from class $\AS$ in 
$AAS[\emptyset]$ without communication.
\end{theorem}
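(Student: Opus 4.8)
The plan is to exhibit a purely local (hence communication-free) map that, at each process $p$, reads the $\AS$ output $a\_sigma_p$ and produces the two $\HS$ outputs $\quora_p$ and $\labels_p$. The observation driving the whole construction is that in $AAS[\emptyset]$ every process carries the same identity $\bot$, so for any set of processes $S$ we have $I(S)=\{\bot,\dots,\bot\}$ with exactly $|S|$ copies. Writing $\bot^{(k)}$ for the multiset consisting of $k$ copies of $\bot$, a multiset $m\subseteq I(\Pi)$ is therefore completely determined by its cardinality: $m=\bot^{(|m|)}$, the containment $m'\subseteq m$ holds iff $|m'|\le|m|$, and $I(Q)=m$ holds iff $|Q|=|m|$. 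This collapses the multiset information carried by $\HS$ onto the integer information already carried by $\AS$.

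Concretely, I would define at every time $\tau$
$\labels_p^{\tau}=\{x : \exists y,\ (x,y)\in a\_sigma_p^{\tau}\}$ and
$\quora_p^{\tau}=\{(x,\bot^{(y)}) : (x,y)\in a\_sigma_p^{\tau}\}$.
The first step is to check that $S(x)=S_A(x)$: a label $x$ belongs to $\labels_p^{\tau}$ exactly when a pair $(x,-)$ belongs to $a\_sigma_p^{\tau}$, so the set of processes that ever expose $x$ through $\labels$ coincides with the set that ever expose it through $a\_sigma$. With this identity in hand, the subset quantifiers of $\HS$ and $\AS$ range over one and the same family of sets, which is what makes the remaining verifications align.

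Then I would check the four $\HS$ properties in turn. Validity is immediate, since the map sends each pair $(x,y)$ to a single pair with the same label, so distinct labels remain distinct. For Monotonicity, part (1) holds because $\AS$-monotonicity makes a label, once present in $a\_sigma_p$, persist, so $\labels_p$ is nondecreasing; part (2) holds because if $(x,y')$ later replaces $(x,y)$ with $y'\le y$ then the associated multisets satisfy $\bot^{(y')}\subseteq\bot^{(y)}$. For Liveness, $\AS$-liveness supplies, for each correct $p$, a persistent pair $(x,y)\in a\_sigma_p$ with $|S_A(x)\cap\C|\ge y$; translating gives $(x,\bot^{(y)})\in\quora_p$ with $\bot^{(y)}\subseteq I(S(x)\cap\C)$, because $|S(x)\cap\C|=|S_A(x)\cap\C|\ge y$.

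The step that needs the most care, and the only real point of friction I anticipate, is Safety, because of its nested quantification over arbitrary subsets $Q_1\subseteq S(x_1)$ and $Q_2\subseteq S(x_2)$. Here I would invoke the cardinality dictionary above: a pair $(x_i,m_i)\in\quora_{p_i}$ corresponds to $(x_i,y_i)\in a\_sigma_{p_i}$ with $y_i=|m_i|$, and the hypothesis $I(Q_i)=m_i$ is equivalent to $|Q_i|=y_i$. Since $S(x_i)=S_A(x_i)$, the sets $Q_1$ and $Q_2$ are exactly the admissible witnesses $T_1,T_2$ of $\AS$-Safety, so that property yields $Q_1\cap Q_2\neq\emptyset$. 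Finally, because the whole transformation only reads each process's own $\AS$ register and performs no message exchange, the ``without communication'' clause follows at once.
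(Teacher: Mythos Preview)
Your construction is exactly the paper's: map each $(x,y)\in a\_sigma_p$ to the label $x$ in $\labels_p$ and the pair $(x,\bot^{(y)})$ in $\quora_p$, with no messages exchanged. The paper simply asserts that ``the properties of $\HS$ follow trivially from the properties of $\AS$,'' whereas you spell out the identification $S(x)=S_A(x)$ and verify Validity, Monotonicity, Liveness and Safety one by one; so your argument is the same approach, only more detailed (the one minor imprecision is calling the liveness witness a ``persistent pair''---$\AS$ liveness only guarantees some witnessing pair at each later time, but your conclusion still goes through since $\HS$ liveness has the same quantifier shape).
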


\begin{theorem}
\label{thm:nap}
Classes $\HP$ and $\HS$ can be obtained from class $\NAP$ 
in $AAS[\emptyset]$ without communication.
\end{theorem}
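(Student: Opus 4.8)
The plan is to build both detectors by purely local post-processing of the $\NAP$ output $anap_p$ (no messages are exchanged), exploiting two facts that $\NAP$ guarantees in $AAS[\emptyset]$, where every identity equals $\bot$ so that $I(T)$ is just the multiset of $|T|$ copies of $\bot$ for any set $T$. Write $c=|\C|$. First, since the alive set $\C^{\tau}$ always contains the correct processes, the Safety of $\NAP$ gives $anap_p^{\tau}\ge|\C^{\tau}|\ge c$ for all $p$ and $\tau$; second, its Liveness makes $anap_p$ equal $c$ from some time on at every correct process. For $\HP$ I would simply set $\trusted_p$ to the multiset of $anap_p$ copies of $\bot$: once $anap_p$ stabilizes at $c$ this multiset equals $I(\C)$, which is exactly the Liveness of $\HP$, and nothing is sent.

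For $\HS$ I would let $\labels_p^{\tau}=\{anap_p^{\tau'}:\tau'\le\tau\}$ be the set of all values output so far and, for each $k\in\labels_p^{\tau}$, place into $\quora_p^{\tau}$ the single pair $(k,m_k)$, where $m_k$ is the multiset of $k$ copies of $\bot$. Validity is immediate (one pair per label), and Monotonicity holds because $\labels_p$ only grows while each $m_k$ never changes. With this choice $S(x)=\{p:p\text{ outputs }anap=x\text{ at some time}\}$, and in the anonymous case the condition $I(Q)=m$ collapses to $|Q|=|m|$.

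The heart of the argument is two counting facts. Let $\tau_k$ be the first time any process outputs $k$. Every $p\in S(k)$ outputs $k$ at some time $\ge\tau_k$, hence is alive then and therefore alive at $\tau_k$; thus $S(k)\subseteq\C^{\tau_k}$, and $\NAP$-Safety at $\tau_k$ gives $|\C^{\tau_k}|\le k$, so $|S(k)|\le k$. Moreover $\C\subseteq\C^{\tau_k}$. Consequently the pair $(k,m_k)$ is realizable (i.e. admits some $Q\subseteq S(k)$ with $|Q|=k$) only if $|S(k)|=k$, which forces $S(k)=\C^{\tau_k}\supseteq\C$ and makes $Q=S(k)$ the unique witness. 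Hence any two realizable quora have witnesses that both contain $\C$ and therefore intersect, while for non-realizable pairs the Safety implication is vacuous. For Liveness, note that $anap_p^{\tau}=c$ forces $|\C^{\tau}|=c$, so the alive set is exactly $\C$ and the outputting process is correct; thus $S(c)=\C$, and from the stabilization time on every correct $p$ has $(c,m_c)\in\quora_p$ with $m_c=I(\C)\subseteq I(S(c)\cap\C)$, which is precisely $\HS$-Liveness.

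The step I expect to be the real obstacle is Safety \emph{across different labels}, since a priori the sets $S(k)$ for distinct $k$ need not overlap; the resolution is the observation above that, because the quorum for label $k$ requires the full count $k$, it is realizable exactly when $S(k)$ exhausts the alive set $\C^{\tau_k}$ and hence contains all correct processes, so every realizable witness contains $\C$. A minor technical point is that $m_k\subseteq I(\Pi)$ requires $k\le n$; this is harmless, since any realizable label already satisfies $k=|S(k)|\le n$, and values $anap_p>n$ (which a well-behaved $\NAP$ never returns and may be assumed away) only generate vacuous, discardable pairs.
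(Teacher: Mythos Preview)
Your proposal is correct and follows essentially the same construction as the paper: set $\trusted_p$ to $anap_p$ copies of $\bot$ for $\HP$, and for $\HS$ use the values output by $\NAP$ as labels with the corresponding multiset of $\bot$'s as the quorum. The only noteworthy variation is in the Safety argument for $\HS$: the paper observes that the realizable sets $S(\cdot)$ are \emph{nested} (for $y\ge y'$ with $|S(\bot^y)|=y$ and $|S(\bot^{y'})|=y'$, one contains the other because they equal alive sets $\C^{\tau_y}$, $\C^{\tau_{y'}}$ at comparable times), whereas you argue that each realizable $S(k)$ equals $\C^{\tau_k}$ and hence contains $\C$, so any two intersect. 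Both derive from the same counting step $|S(k)|\le|\C^{\tau_k}|\le k$; your version is more explicit and arguably cleaner, while the paper's nesting formulation avoids the (standard) assumption $\C\neq\emptyset$.
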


\begin{figure}[htbp]
\begin{center}
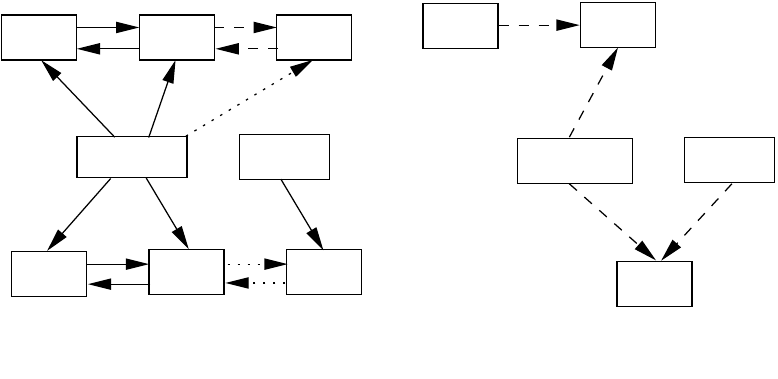
\caption{Relations   between  failure  detector   classes  in   the  models
$AS[\emptyset]$  and $AAS[\emptyset]$. 
There  is  an  arrow from  class  $X$  to  $X'$  if  $X$ is  stronger  that
$X'$. Solid arrows are 
relations       shown       by       Bonnet       and       Raynal       in
\cite{DBLP:conf/wdag/BonnetR10}. Dashed arrows are 
relations shown here, while dotted arrows are trivial relations.}
\label{default}
\end{center}
\end{figure}


\section{Implementing Failure Detectors in  Homonymous Systems}
\label{appendix:FDs-implementation}

%
In this section, we show that there are algorithms that implement 
the failure detectors classes $\HP$ and $\HO$ in $HPS[\emptyset]$
 (homonymous partially synchronous system). We also implement the 
failure detector $\HS$ in $HSS[\emptyset]$ (homonymous synchronous system).
 In all cases they do not need to know initially the membership. 



\begin{figure}
\begin{lstlisting}
Init
	$h\_trusted_p$ := $\emptyset$;  // multiset of process identifiers
	$mship_p$ := $\emptyset$; // set of process identifiers
	$r_p$ := 1;
	$timeout_p$ := 1;
	start Tasks T1 and T2;

Task T1
	repeat forever (' \label{f1-9}')
		broadcast $(\mathit{POLLING}, r_p, id(p))$;
		wait $timeout_p$ time;
		$tmp_p$ := $\emptyset$; // $tmp_p$ is an auxiliary multiset
		for each $(\mathit{P\_REPLY}, r, r', id(p), id(q))$ received 
		             with $(r \le r_p \le r')$ do (' \label{f1-13}')
			add one instance of $id(q)$ to $tmp_p$;
		end for; (' \label{f1-15} ')
		$h\_trusted_p$ := $tmp_p$; (' \label{f1-16}')
		$r_p$ := $r_p + 1$; (' \label{f1-17}')
	end repeat; (' \label{f1-18}')

Task T2	
	upon reception of $(\mathit{POLLING}, r_q, id(q))$ do (' \label{f1-21} ')
		if $id(q) \notin mship_p$ then (' \label{f1-22} ')
			$mship_p$ := $mship_p \cup \{id(q)\}$;
			create $latest\_r_p[id(q)]$;
			$latest\_r_p[id(q)]$ := 0;
		end if; (' \label{f1-26} ')
		if $latest\_r_p[id(q)] < r_q$ then (' \label{f1-27}')
		  broadcast   $(\mathit{P\_REPLY},  latest\_r_p[id(q)]  + 1,  r_q,  id(q),id(p))$; (' \label{f1-28} ')
		end if;  (' \label{f1-29} ')
		$latest\_r_p[id(q)]$ := $\max(latest\_r_p[id(q)], r_q)$;  (' \label{f1-30} ')
	
	upon reception of $(\mathit{P\_REPLY}, r, r', id(p), -)$ with ($r < r_p$) do (' \label{f1-32} ')
		$timeout_p$ := $timeout_p + 1$; (' \label{f1-33} ')
		
	

\end{lstlisting}		
\caption{Algorithm that implements $\HP$ (code for process $p$).}
\label{Fig-HP}
\end{figure}


\subsection{Implementation of $\HP$ and $\HO$}
The algorithm of Figure \ref{Fig-HP} implements ${\HP}$ (and ${\HO}$ 
with trivial changes) in $\mathit{HPS}[\emptyset]$ where processes 
are partially synchronous, links are eventually timely, and membership 
is not known. 

\tightparagraph{Brief description of the algorithm:} 
It is a polling-based algorithm that executes in rounds. 
At every round $r$, the Task 1 of each process $p$ broadcasts 
$(POLLING,r,id(p))$ messages. After a time $timeout_p$, 
it gathers in the variable $tmp_p$ (and, hence, also in $\trusted_p$) 
a multiset with the senders' identifiers $id_s$ of processes from 
$(P\_REPLY,r',r'',id(p),id_s)$ messages received with $r' \le r \le r''$.

Task 2 is related with the reception of $POLLING$ and $P\_REPLY$ 
messages. When a process $p$ receives a $(POLLING,r,id(q))$ message from 
process $q$, process $p$ has to respond with as many $P\_REPLY$ 
as process $q$ needs to receive up to round $r$, and not 
previously sent by process $p$ (Lines~\ref{f1-27}-\ref{f1-29}). 
Note that the $P\_REPLY$ 
messages are piggybacked in only one message (Line~\ref{f1-28}). 
Also note that is in variable $latest\_r_p[id(q)]$ where $p$ holds the 
latest round broadcast to $id(q)$. 
If it is the first time that process $p$ receives a $(POLLING,-,id)$ 
message from a process with identifier 
$id$, then variable $latest\_r_p[id]$ is created and initialized to 
zero (Lines~\ref{f1-22}-\ref{f1-26}).

It is important to remark that, for each different identifier $id$, 
only one $(P\_REPLY,-,-,id(q),id)$ 
message is broadcast by each process $q$. So, if processes $v$ and 
$w$ with $id(v)=id(w)=x$ broadcast two $(POLLING,r,x)$ messages, 
then each process $p$ only broadcast one $(P\_REPLY,r',r'',x,q)$ 
message with  $r' \le r \le r''$. Note that eventually (at least after 
GST time) each $P\_REPLY$  
message sent by any process has to be received by all correct processes. 
Hence, eventually processes $v$ and $w$ 
will receive all $P\_REPLY$  messages generated due to $POLLING$ messages.

Finally, Lines~\ref{f1-32}-\ref{f1-33} of Task 2 allow process $p$ 
to adapt the variable 
$timeout_p$ to the communication latency and process speed. 
When process $p$ receives an outdated $(P\_REPLY,r,-,id(p),-)$ 
message (i.e., a message with round $r$ less than current round $r_p$), 
then it increases its variable $timeout_p$.

\begin{lemma}
\label{l-faulty}
Given processes $p \in \C$ and $q \notin \C$, there is a round $r$ 
such that $p$ does not receive any 
$(\mathit{P\_REPLY}, \rho, \rho', id(p) , id(q))$ message from 
$q$ with $\rho' \ge r$.
\end{lemma}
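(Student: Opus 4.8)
The plan is to exploit the fact that a faulty process takes only finitely many steps, and therefore broadcasts only finitely many messages. First I would fix that, since $q \notin \C$, process $q$ crashes at some finite global time and thereafter takes no step. Consequently $q$ executes only finitely many steps in the whole run, and since every $\mathit{P\_REPLY}$ is emitted during a single step (the broadcast at Line~\ref{f1-28}), $q$ broadcasts only finitely many $\mathit{P\_REPLY}$ messages in total; a fortiori it broadcasts only finitely many of the specific form $(\mathit{P\_REPLY}, \rho, \rho', id(p), id(q))$.

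Next I would pin down the value $\rho'$. Inspecting Task~T2, the only emission point is Line~\ref{f1-28}: upon receiving a $(\mathit{POLLING}, r', id)$ message, $q$ broadcasts a message of the shape $(\mathit{P\_REPLY}, \cdot, r', id, id(q))$. Hence every message $(\mathit{P\_REPLY}, \rho, \rho', id(p), id(q))$ that $q$ broadcasts has $\rho'$ equal to the round $r'$ carried by some $(\mathit{POLLING}, r', id(p))$ message that $q$ received (from $p$ itself or from a homonym of $p$). As there are only finitely many such $\mathit{P\_REPLY}$ messages, their $\rho'$ values form a finite set.

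I would then take $r := R + 1$, where $R$ is the maximum of these finitely many $\rho'$ values (with $R := 0$ if $q$ emits no such message). By construction, $q$ never broadcasts $(\mathit{P\_REPLY}, \rho, \rho', id(p), id(q))$ with $\rho' \ge r$. To conclude I would invoke channel reliability: links neither corrupt nor generate spurious messages, so every message that $p$ receives from $q$ was genuinely broadcast by $q$. Therefore $p$ receives no $(\mathit{P\_REPLY}, \rho, \rho', id(p), id(q))$ message from $q$ with $\rho' \ge r$, which is exactly the claim.

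I do not foresee a real obstacle: the whole argument rests on two model facts, namely that a crashed process sends only finitely many messages and that channels do not invent messages. The only subtlety worth stating explicitly is that the finiteness of the relevant $\rho'$ values comes from the finiteness of the $\mathit{POLLING}$ rounds that $q$ answers before crashing, and not from any bound on round numbers themselves (the rounds $r_p$ of correct processes grow without bound); it is precisely $q$'s crash that caps how many polling rounds it can ever reply to.
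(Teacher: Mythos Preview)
Your proposal is correct and follows essentially the same approach as the paper: both arguments use that the faulty process $q$ sends only finitely many $\mathit{P\_REPLY}$ messages, take $r$ to be one more than the maximum $\rho'$ ever sent (or $r=1$ if none), and conclude via channel reliability that $p$ cannot receive any such message with $\rho' \ge r$. Your write-up is more detailed than the paper's (you spell out where $\rho'$ comes from and explicitly invoke the no-spurious-messages property), but the idea is identical.
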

\begin{proof}
There is a time $\tau$ at which $q$ stops taking steps. If $q$ ever 
sent a $(\mathit{P\_REPLY}, -, - , id(p) , id(q))$ message, 
consider the largest $x$
such that $q$ sent message $(\mathit{P\_REPLY}, -, x, id(p) , id(q))$. 
Otherwise, let $x=0$. Then, the claim holds for $r=x+1$.
\end{proof}

\begin{lemma}
\label{l-correct}
Given processes $p,q \in \C$, there is a round $r$ such that, for all 
rounds $r' \geq r$, 
when $p$ executes the loop of Lines~\ref{f1-13}-\ref{f1-15} 
with $r_p = r'$, it has received 
a message 
$(\mathit{P\_REPLY}, \rho, \rho', id(p) , id(q))$ from $q$ with 
$\rho \le r' \le \rho'$.
\end{lemma}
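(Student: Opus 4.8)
The plan is to prove the lemma in two steps: first show that for \emph{every} round $r'$ process $q$ broadcasts a reply whose round interval covers $r'$, and then show that from some round on such a reply always reaches $p$ before $p$ finishes waiting in its round-$r'$ iteration. Equivalently, calling a round $r'\ge 1$ \emph{bad} when $p$ does not hold a covering reply from $q$ at the moment it runs Lines~\ref{f1-13}--\ref{f1-15} with $r_p=r'$, I would show that only finitely many rounds are bad.

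For coverage, observe that when $q$ (running Task~T2) handles $(\mathit{POLLING},r_q,id(p))$ messages it maintains $latest\_r_q[id(p)]$ and, whenever $r_q$ exceeds the stored value, emits a single reply covering the interval $[latest\_r_q[id(p)]+1,\,r_q]$ and then raises the stored value to $r_q$. Hence the successive replies of $q$ to identity $id(p)$ cover disjoint consecutive intervals $[1,a_1],[a_1+1,a_2],\dots$ whose union is $[1,\sup_i a_i]$. Because $p\in\C$ broadcasts $(\mathit{POLLING},r,id(p))$ for $r=1,2,3,\dots$, process $q$ receives arbitrarily large rounds tagged with $id(p)$, so $\sup_i a_i=\infty$ and every round $r'$ lies in exactly one such interval; this yields a unique covering reply $(\mathit{P\_REPLY},\rho,\rho',id(p),id(q))$ with $\rho\le r'\le\rho'$. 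I would then rule out message loss: let $R_0$ be the largest round tagged $id(p)$ that $q$ receives strictly before $GST$ (finite, since only finitely many steps occur before $GST$). For $r'>R_0$ the triggering poll of round $\ge r'$ reaches $q$ after $GST$, so its covering reply is sent after $GST$, is not lost, and is therefore eventually delivered to the correct process $p$.

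The delicate part is the timing, and here the real obstacle is \emph{homonymy}: $q$ need not answer $p$'s own round-$r'$ poll, because a homonymous process sharing $id(p)$ that runs ahead of $p$ may already have pushed $latest\_r_q[id(p)]$ past $r'$, so that $q$'s covering reply for $r'$ was actually emitted in response to someone else's poll. I would handle this with a post-$GST$ round-trip bound combined with the behaviour of $timeout_p$. Using eventual timeliness ($\delta$) and partial synchrony (step bound $\pi$), set $C=2\delta+\pi$; a short case analysis shows that in both situations (whether $q$'s covering reply for $r'$ is triggered by $p$'s own poll or by an earlier poll of a faster homonym) the reply reaches $p$ by time $t_{poll}(r')+C$, where $t_{poll}(r')$ is when $p$ broadcasts its round-$r'$ poll (after $GST$ for $r'$ large). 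Consequently, whenever $timeout_p\ge C$ at round $r'$, the covering reply arrives within the wait, so $r'$ is good.

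It then remains to argue that $timeout_p$ eventually reaches $C$, which I would do by contradiction. Suppose infinitely many rounds $r'>R_0$ are bad. For such a round, $p$ still eventually receives the (post-$GST$, hence non-lost) covering reply, but only at a moment when its current round already exceeds $r'$ (otherwise it would have been available when $p$ ran its round-$r'$ iteration); since that reply has starting field $\rho\le r'<r_p$, its reception matches the guard of Lines~\ref{f1-32}--\ref{f1-33} and increments $timeout_p$. As each covering reply spans a finite interval, infinitely many bad rounds force infinitely many distinct such late replies, and hence $timeout_p\to\infty$. But then $timeout_p\ge C$ from some round on, so by the previous paragraph every sufficiently large round is good, contradicting the existence of infinitely many bad rounds. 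Therefore only finitely many rounds are bad, and choosing $r$ beyond the last bad round (and beyond $R_0$) yields the claim. Note that Lemma~\ref{l-faulty} is not needed here; it serves the complementary direction (eviction of faulty identities) when the two are combined to establish $\HP$.
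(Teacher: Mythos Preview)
Your proof is correct and follows essentially the same approach as the paper: assume infinitely many bad rounds, argue that the corresponding late covering replies trigger Lines~\ref{f1-32}--\ref{f1-33} so that $timeout_p\to\infty$, and then derive a contradiction once $timeout_p$ exceeds a round-trip bound of the form $2\delta+\gamma$. You are more explicit than the paper about the interval-covering structure of $q$'s replies and about the homonymy subtlety (that the reply covering $r'$ may be triggered by a faster homonym of $p$ rather than by $p$'s own poll, and is then sent no later than when $q$ would process $p$'s poll), but these are elaborations of the same argument rather than a different route.
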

\begin{proof}
Observe that, since $p$ is correct, it will repeat forever the loop of 
Lines~\ref{f1-9}-\ref{f1-18}, with the value of $r_p$ 
increasing in one unit at each iteration.
Hence, $p$ will be sending forever messages $(\mathit{POLLING}, -, id(p))$ 
after $GST$ with increasing round numbers, that will eventually
 be received by $q$. Then, $q$ eventually will send infinite 
$(\mathit{P\_REPLY}, -, - , id(p) , id(q))$ messages after $GST$, 
with increasing round numbers. 
Let $(\mathit{P\_REPLY}, x, - , id(p) , id(q))$ be the first such 
message sent by $q$ after $GST$. Then, for each round number $y \ge x$, 
there is some message $(\mathit{P\_REPLY}, \rho, \rho', id(p) , id(q))$ 
sent by $q$ with $\rho \le y \le \rho'$, and these messages are delivered at
 $p$ at most $\delta$ time after being sent.

Now, assume for contradiction that for each round $y \geq x$, 
there is a round $y' \geq y$ such that, when $p$ executes the loop 
of Lines~\ref{f1-13}-\ref{f1-15} with $r_p = y'$, 
it has not received the message 
$(\mathit{P\_REPLY}, \rho, \rho', id(p) , id(q))$ from $q$ with 
$\rho \le y' \le \rho'$. But, every time this happens, when the message 
is finally received,
$r_p$ has been incremented in Line~\ref{f1-17} and, hence, 
$\mathit{timeout}_p$ 
is incremented (in Lines~\ref{f1-32}-\ref{f1-33}). 
Then, eventually, by some round $r$, 
the value of $\mathit{timeout}_p$ will be greater than $2 \delta + \gamma$, 
where $\gamma$ is the maximum time that $q$ takes 
to execute Lines~\ref{f1-21}-\ref{f1-30}. 
Then, $p$ will receive 
message $(\mathit{P\_REPLY}, \rho, \rho', id(p) , id(q))$ 
with $\rho \le r' \le \rho'$ before executing the loop of 
Lines~\ref{f1-13}-\ref{f1-15} 
with $r_p = r'$, for all $r' \geq r$. We have reached a contradiction 
and the claim of the lemma follows.
\end{proof}

\begin{theorem}
The algorithm of Figure \ref{Fig-HP} implements a failure detector of 
the class ${\HP}$ in a system $\mathit{HPS}[\emptyset]$ (homonymous 
system where processes are partially synchronous and links are eventually 
timely), even if the membership is not known initially.
\end{theorem}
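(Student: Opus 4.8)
The goal is to establish the Liveness property that defines $\HP$: for every correct process $p$ there is a time after which $\trusted_p$ is permanently equal to the multiset $I(\C)$. First I would observe that $\trusted_p$ (the variable $h\_trusted_p$) is written only at Line~\ref{f1-16}, where it receives the multiset $tmp_p$ assembled during the current round. Hence it suffices to exhibit a round $R$ such that, for every round $r' \ge R$, the multiset $tmp_p$ built by $p$ while executing the loop of Lines~\ref{f1-13}--\ref{f1-15} with $r_p = r'$ equals exactly $I(\C)$. Since $\trusted_p$ then takes value $I(\C)$ at every such round and is never afterwards overwritten with anything else, the property follows for all sufficiently large times $\tau'$.

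The heart of the argument is a multiset count: for each identifier $x$ I must show that the multiplicity of $x$ in $tmp_p$ stabilizes to $\mathit{mult}_{I(\C)}(x)$, i.e. to the number of correct processes whose identity is $x$. I would split this into a lower and an upper bound. For the lower bound, I invoke Lemma~\ref{l-correct}: for each correct process $q$ there is a round $r_q$ beyond which, each time $p$ runs the collection loop, it has already received a message $(\mathit{P\_REPLY}, \rho, \rho', id(p), id(q))$ with $\rho \le r_p \le \rho'$, so $p$ adds one instance of $id(q)$ to $tmp_p$. Because two distinct responders produce two distinct messages in the network even when they share an identifier, each correct process with identity $x$ contributes at least one instance of $x$; summing over them yields at least $\mathit{mult}_{I(\C)}(x)$ instances.

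For the upper bound I must rule out over-counting from two sources. First, every faulty process must eventually stop contributing: by Lemma~\ref{l-faulty} there is a round $r$ after which $p$ receives no $(\mathit{P\_REPLY}, \rho, \rho', id(p), id(q))$ from a given faulty $q$ with $\rho' \ge r$, so for every $r_p \ge r$ no such reply satisfies the guard $\rho \le r_p \le \rho'$ and $q$ adds nothing. Second, a single correct responder $q$ must not add two instances in the same round: here I rely on the structure of Task~2, where $q$ answers pollings addressed to identity $id(p)$ through the single counter $latest\_r_q[id(p)]$, replies only for rounds strictly above it, and then updates it. As noted in the algorithm's description, this makes the round intervals $[\rho, \rho']$ of the replies that $q$ sends toward poller identity $id(p)$ pairwise disjoint (consecutive), so at most one of them can contain any fixed $r_p$, whence $q$ supplies at most one qualifying reply per round. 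Combining the two, for $r_p$ above the (finitely many) thresholds produced by the two lemmas, the multiplicity of $x$ in $tmp_p$ is at most $\mathit{mult}_{I(\C)}(x)$.

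Taking $R$ to be the maximum of all per-process thresholds from Lemmas~\ref{l-faulty} and~\ref{l-correct}, a finite maximum since $|\Pi| = n$ is finite, both bounds hold simultaneously for every $r' \ge R$, giving $tmp_p = I(\C)$ and hence $\trusted_p^{\tau'} = I(\C)$ for all large enough $\tau'$. The main obstacle is exactly this multiplicity bookkeeping forced by homonymy: unlike the non-homonymous case, an identifier may be shared by several processes, some correct and some faulty, so obtaining the count \emph{exactly} right requires both the per-responder disjoint-interval property for the upper bound and the per-process guarantee of Lemma~\ref{l-correct} for the lower bound, rather than a plain set-membership argument.
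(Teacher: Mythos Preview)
Your proposal is correct and follows essentially the same approach as the paper: invoke Lemma~\ref{l-faulty} to bound out the faulty processes, Lemma~\ref{l-correct} to guarantee a contribution from every correct process, and take the maximum of the resulting thresholds. The paper's own proof is terser and leaves implicit the upper-bound step you spell out (that the disjoint reply intervals generated via $latest\_r_q[id(p)]$ prevent any single responder from being counted twice in a round); your explicit multiplicity bookkeeping is a welcome clarification rather than a different route.
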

\begin{proof}
Consider a correct process $p$. From Lemma~\ref{l-faulty}, there is a 
round $r$ such that $p$ does not receive any 
$(\mathit{P\_REPLY}, \rho, \rho', - , -)$ message with
 $\rho' \ge r$ from any faulty process. From Lemma~\ref{l-correct}, 
there is a round $r'$ such that for all rounds $r'' \geq r'$, when 
$p$ executes the loop of Lines~\ref{f1-13}-\ref{f1-15} 
with $r_p = r''$, it has 
received a $(\mathit{P\_REPLY}, \rho, \rho', -, -)$ message with 
$\rho \le r'' \le \rho'$
from each correct process. Hence, for every round $r'' \geq \max(r,r')$ 
when the Line~\ref{f1-16} is executed with $r_p=r''$, 
the variable $h\_trusted_p$ 
is updated with the multiset $I(\C)$.
\end{proof}

We can obtain $\HO$ from the algorithm of Fig.~\ref{Fig-HP} without 
additional communication. This can be done by simply including, 
immediately after Line~\ref{f1-16}, $\leader_p \leftarrow \min(\trusted_p)$  
(i.e., the smallest identifier in $\trusted_p$) and $\multiplicity_p 
\leftarrow \mathit{mult}_{\trusted_p}(\leader_p)$.
 
\begin{corollary}
The algorithm of Figure \ref{Fig-HP} can be changed to implement a 
failure detector of the class $\HO$ in a system $\mathit{HPS}[\emptyset]$ 
(homonymous system where processes are partially synchronous and links 
are eventually timely), even if the membership is not known initially.
\end{corollary}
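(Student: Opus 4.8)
The plan is to reduce the statement directly to the preceding Theorem, which already establishes that the algorithm of Figure~\ref{Fig-HP} implements $\HP$ in $\mathit{HPS}[\emptyset]$. Recall that this guarantees that for every correct process $p$ there is a time $\tau_p$ after which $\trusted_p$ equals $I(\C)$ and never changes again. Since $\C$ is finite and nonempty (there is at least one correct process), the multiset $I(\C)$ is fixed, so I can set $\ell := \min(I(\C))$, the smallest identifier appearing among the correct processes. Note that $\ell \in I(\C)$, hence $\ell$ is the identifier of at least one correct process, as demanded by the Election property of $\HO$.

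First I would observe that the two assignments inserted after Line~\ref{f1-16}, namely $\leader_p \leftarrow \min(\trusted_p)$ and $\multiplicity_p \leftarrow \mathit{mult}_{\trusted_p}(\leader_p)$, are a purely local, deterministic function of the current value of $\trusted_p$ and add no communication whatsoever; hence they cannot affect the behavior of $\trusted_p$ already analyzed for $\HP$. Thus the conclusion of the Theorem continues to hold verbatim for $\trusted_p$, and I may use it as a black box.

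Next, I would let $\tau := \max_{p \in \C} \tau_p$, which is finite since $\C$ is finite. For every correct $p$ and every round whose Line~\ref{f1-16} is executed at a time $\tau' \ge \tau$, we have $\trusted_p = I(\C)$, so the inserted lines compute $\leader_p = \min(I(\C)) = \ell$ and $\multiplicity_p = \mathit{mult}_{I(\C)}(\ell)$. Because every correct process executes the repeat loop of Lines~\ref{f1-9}-\ref{f1-18} infinitely often, these assignments are performed infinitely often after $\tau$ and the stored values never change thereafter. This is precisely the Election property of $\HO$, with witness identifier $\ell$ and witness time $\tau$.

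There is essentially no hard part here: the corollary is immediate once the Theorem is granted, since $\HO$ is obtained from $\HP$ by applying a deterministic, communication-free selection function to the already-stabilized output. The only point requiring a line of care is that the selection function be deterministic and defined on multisets, so that all correct processes, observing the same stabilized multiset $I(\C)$, select the same $\ell$ and the same multiplicity; the function $\min(\cdot)$ serves this purpose, and any other deterministic rule that returns an element of its argument would work equally well.
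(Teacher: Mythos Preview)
Your proposal is correct and follows exactly the same approach as the paper: insert the two assignments $\leader_p \leftarrow \min(\trusted_p)$ and $\multiplicity_p \leftarrow \mathit{mult}_{\trusted_p}(\leader_p)$ after Line~\ref{f1-16} and appeal to the preceding theorem for the eventual stabilization of $\trusted_p$ to $I(\C)$. The paper in fact states the corollary without a formal proof, merely describing the modification in the paragraph preceding it; your write-up simply makes explicit the (easy) verification that the Election property of $\HO$ then holds.
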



\subsection{Implementation of ${\HS}$}
Figure \ref{Fig-HS} implements ${\HS}$ in $HSS[\emptyset]$] 
where processes are synchronous, links are timely, and
membership is not known. 

\tightparagraph{Brief description of the algorithm} 
It runs in synchronous steps. In each step every process $p$ 
broadcasts a $(IDENT,id(p))$ message. Then, process $p$ waits for
$(IDENT,-)$ messages sent through reliable links 
in this synchronous step by alive processes. Process $p$ 
gathers in the multiset variable $mset_p$ the 
identifiers $id$ of all $(IDENT,id)$ messages received. 
At the end of this step, variables $\quora_p$ and $\labels_p$ 
are updated with the value of $mset_p$. 
Note that for process $p$ the label $x$ of a quorum $(x,m)$ is formed by the 
multiset $mset_p$ (i.e, $x=m=mset_p$).

\begin{theorem}
The algorithm of Figure \ref{Fig-HS} implements a failure detector 
of the class ${\HS}$ in a system $\mathit{HSS}[\emptyset]$ (homonymous 
synchronous systems), even if the membership is not known initially.
\end{theorem}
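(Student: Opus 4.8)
The plan is to verify, one at a time, the four defining properties of class $\HS$ (\emph{Validity}, \emph{Monotonicity}, \emph{Liveness}, and \emph{Safety}) for the output of the algorithm of Figure~\ref{Fig-HS}. First I fix notation for a run: write $s=1,2,\dots$ for the synchronous steps, and for a process $p$ alive at step $s$ let $R_{p,s}$ be the set of processes whose $(\mathit{IDENT},-)$ message is delivered to $p$ during step $s$, so that the computed multiset is $mset_p = I(R_{p,s})$, the pair appended to $\quora_p$ at the end of the step is $(mset_p,mset_p)$, and $\labels_p$ accumulates $mset_p$. The one structural fact that drives everything is a consequence of synchrony and link timeliness: every correct process completes its broadcast at every step, so $\C \subseteq R_{p,s}$ for every $p$ alive at step $s$; hence $I(\C) \subseteq mset_p$, i.e. $I(\C)$ is a sub-multiset of \emph{every} label ever produced.

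The three ``static'' properties are then immediate. \emph{Validity} holds because every appended pair has the form $(x,x)$, so a label $x$ names at most one pair and $\quora_p$ never holds two pairs with the same label. \emph{Monotonicity} holds because $\labels_p$ and $\quora_p$ are only ever augmented, never pruned: part~(1) is the inclusion $\labels_p^{\tau}\subseteq\labels_p^{\tau'}$, and for part~(2) the witness is $m'=m$ itself, since a pair $(x,x)$ once present stays present. \emph{Liveness} follows from stabilization: let $\tau$ be a time past which all faulty processes have crashed; from $\tau$ on, each correct process hears exactly the correct processes, so $mset_p = I(\C)$ and the pair $(I(\C),I(\C))$ sits permanently in $\quora_p$. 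Taking $(x,m)=(I(\C),I(\C))$, every correct process lies in $S(I(\C))$, whence $S(x)\cap\C=\C$ and $m=I(\C)=I(S(x)\cap\C)$, giving the required $m \subseteq I(S(x)\cap\C)$.

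\emph{Safety} is the real work, and I expect it to be the main obstacle. Fix labels $x_1,x_2$ with realizations $Q_1\subseteq S(x_1)$ and $Q_2\subseteq S(x_2)$ satisfying $I(Q_1)=x_1$ and $I(Q_2)=x_2$; the goal is $Q_1\cap Q_2\neq\emptyset$. The common core $I(\C)\subseteq x_1,x_2$ established above is exactly what would, under unique identifiers, force $Q_1$ and $Q_2$ to share the correct processes outright. The genuine difficulty is \emph{homonymy}: a realization drawn from $S(x)$ may legally substitute a faulty process for a correct one bearing the same identifier, so a shared identifier no longer guarantees a shared \emph{process}, and a naive ``common core'' argument fails. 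My plan is to rule out a disjoint pair by exploiting the synchronous timing of witnesses together with the fact that the alive sets form a decreasing chain $C_1\supseteq C_2\supseteq\cdots$. Concretely, any $q\in S(x)$ witnessed $x$ at some step at which it was alive, and at that step all correct homonyms of every identifier are simultaneously counted in $x$ (they are always heard); this pins the multiplicity $\mathit{mult}_x(i)$ of each identifier $i$ to the correct homonyms plus the faulty homonyms still alive, and thereby bounds how many faulty witnesses of a given identifier $S(x)$ can contain.

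The heart of the argument is to convert these multiplicity constraints into a forced overlap. The strategy I would pursue is a \emph{latest-witness} argument: among all processes of $Q_1\cup Q_2$, pick one whose witnessing step $s^\star$ is maximal, say $q^\star\in Q_1$ heard $x_1$ at step $s^\star$. Because the alive sets shrink monotonically and the correct processes are heard at every step, the multiset witnessed at $s^\star$ is tightly controlled, and every identifier that still carries a faulty witness after $s^\star$ must already be ``used up'' by correct homonyms in both $x_1$ and $x_2$. Combining this with the disjointness hypothesis should over-count the processes bearing some identifier beyond the number that can simultaneously lie in $S(x_1)\cup S(x_2)$ and be alive late enough to be witnesses, yielding the contradiction. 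I expect the delicate points to be (i) handling processes that crash in mid-step and hence reach only part of the system, so that two processes may witness the \emph{same} label for different underlying reasons, and (ii) the multiset bookkeeping that must replace the set-intersection reasoning available under unique identifiers. These are precisely where the homonymous setting departs from the classical $\Sigma$ proof, and where I would concentrate the detailed effort.
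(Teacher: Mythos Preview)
Your treatment of Validity, Monotonicity, and Liveness is correct and matches the paper's proof. The divergence is entirely in Safety, where you set up a ``latest-witness'' over-counting argument and flag substitution of homonymous faulty processes for correct ones as the main obstacle. That obstacle is illusory, and the paper's argument is far shorter than what you outline.

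Here is the step you are missing. For a label $x_1$ appearing in $\quora_{p_1}$, let $M_1$ be the set of processes whose $(\mathit{IDENT},-)$ message $p_1$ received at the step $s_1$ where $(x_1,x_1)$ first entered $\quora_{p_1}$; so $I(M_1)=x_1$ and $|M_1|=|x_1|$. The paper asserts (without spelling out the justification) that $S(x_1)\subseteq M_1$. This is true: suppose $q\in S(x_1)\setminus M_1$. Then $q$ fails to be heard by $p_1$ at step $s_1$, so $q$ does not complete step $s_1$, hence the step $s_q$ at which $q$ computed $x_1$ satisfies $s_q<s_1$. But every process in $M_1$ is alive at step $s_1>s_q$, hence completes step $s_q$; and $q$ itself completes step $s_q$. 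Thus $q$ hears from at least $|M_1\cup\{q\}|=|x_1|+1$ processes at step $s_q$, contradicting $|mset_q^{s_q}|=|x_1|$. So $S(x_1)\subseteq M_1$.

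Once you have $S(x_1)\subseteq M_1$, the safety argument is two lines. Any $Q_1\subseteq S(x_1)$ with $I(Q_1)=x_1$ satisfies $Q_1\subseteq M_1$ and $|Q_1|=|x_1|=|M_1|$, so $Q_1=M_1$. Likewise $Q_2=M_2$. Since $\C\subseteq M_1$ and $\C\subseteq M_2$, we get $Q_1\cap Q_2\supseteq\C\neq\emptyset$. In particular there is \emph{no} substitution freedom at all: the realization $Q_1$ is uniquely determined, and it necessarily contains every correct process. Your elaborate plan to control substitutions via decreasing alive-chains and multiplicity bookkeeping is therefore unnecessary; the single containment $S(x)\subseteq M$ collapses the whole difficulty.
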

\begin{proof}
From the definition of $\HS$, it is enough to prove the following properties.

\emph{Validity.}
Since $\quora_p$ is a set, and the elements included in it are of the 
form $(mset,mset)$ (see Line~\ref{f2-7} in Figure \ref{Fig-HS}) there cannot 
be two pairs with the same label.
   
\emph{Monotonicity.}
The monotonicity of $\labels_p$ in Figure \ref{Fig-HS} holds because 
$\labels_p$ is initially empty, and 
each step, $\labels_p$ either grows or remains the same 
(see Line~\ref{f2-8} in Figure \ref{Fig-HS}). 
Similarly, the monotonicity of $\quora_p$ in Figure \ref{Fig-HS} 
follows from the fact that $\quora_p$ is 
initially empty, and any element $(mset,mset)$ included in it is 
never removed (see Line~\ref{f2-7} in Figure \ref{Fig-HS}).

\emph{Liveness.}
Let $s$ be the synchronous step in which the last faulty process crashed. 
Then, in every step $s'$ after $s$ only correct processes will execute. 
Consider any process $p \in \C$. In step $s'$ will receive messages 
from all correct 
processes, and, hence, $mset_p=I(\C)$. 
Then, process $p$ includes $(I(\C),I(\C))$ in $\quora_p$, and $I(\C)$ in  
$\labels_p$. Therefore, each correct process $p$ is in $S(I(\C))$. 
So, after step $s$, for each correct process $p$, the pair $(I(\C),I(\C))$ 
is in $\quora_p$, and $I(\C)=I(S(I(\C)) \cap  \C)$. 
 
\emph{Safety.}
Consider two pairs $(x_1,x_1) \in \quora_{p_1}^{\tau_1}$ and 
$(x_2,x_2) \in \quora_{p_2}^{\tau_2}$, 
for any $p_1, p_2 \in \Pi$ and any $\tau_1, \tau_2 \in N$. 

Let $M_1$ be the set of processes from which $p_1$ received 
$(\mathit{IDENT},-)$ messages in the synchronous step in which
 $(x_1,x_1)$ was inserted for the first time in  $\quora_{p_1}$. 
Observe that $\C \subseteq M_1$. 
Furthermore, any process $p \in S(x_1)$ must also be in $M_1$ 
(i.e., $S(x_1) \subseteq M_1$). Also, $x_1=I(M_1)$, and,  
hence, $|x_1|=|M_1|$. Therefore, the only set $Q_1 \subseteq S(x_1)$ 
such that $I(Q_1)=x_1$ is $Q_1=M_1$.  We define $M_2$ similarly, 
and conclude that the only set $Q_2 \subseteq S(x_2)$ such that 
$I(Q_2)=x_2$ is $Q_2=M_2$. Since
$Q_1 \cap Q_2 \supseteq \C \not= \emptyset$, the safety property holds. 
\end{proof}

\begin{figure}
\begin{lstlisting}
  $\labels_p \leftarrow \emptyset$;
  $\quora_p \leftarrow \emptyset$;
  for each synchronous step do
    broadcast $(\mathit{IDENT},id(p))$;
    wait for the messages sent in this synchronous step; 
    $mset_p \leftarrow$ multiset of identifiers received in $(\mathit{IDENT},-)$ messages;
    $\quora_p \leftarrow \quora_p \cup \{(mset_p,mset_p)\}$ ('\label{f2-7}')
    $\labels_p \leftarrow \labels_p \cup \{mset_p\}$; ('\label{f2-8}')
  end for;
\end{lstlisting}		
\caption{Algorithm to implement $\HS$ without knowledge of membership 
(code for process $p$)}
\label{Fig-HS}
\end{figure}


\section{Solving  Consensus in Homonymous Systems}
\label{sec:FDs-based-consensus}

%
%

\begin{figure}
\begin{lstlisting}
operation propose($v_p$): 
  $est1_p$ := $v_p$; $r_p$:=$0$; ('\label{f3-2}')
  start Tasks T1 and T2;
  
Task T1
  repeat forever
    $r_p$:=$r_p +1$;  ('\label{f3-8}')
    // Leaders' Coordination Phase
    broadcast $(COORD, id(p),r_p, est1_p)$; ('\label{f3-10}')
    wait until ($D.\leader_p \neq id(p)) \lor$ ('\label{f3-11a}')
      ($D.\multiplicity_p$ messages $(COORD, id(p),r_p,-)$ received); ('\label{f3-11b}')
    if (some message $(COORD, id(p),r_p,-)$ received) then 
       $est1_p$:=$\min\{est_q :id(p)=id(q) \land$
                         $(COORD, id(q),r_p,est_q)$ received $\}$ end if;  ('\label{f3-13}')
    // Phase 0
    wait until $(D.\leader_p=id(p) \lor ((PH0, r_p,v)$ received); ('\label{f3-15}')
    if ($(PH0, r_p,v)$ received) then $est1_p$ :=$v$ end if; ('\label{f3-16}')
    broadcast$(PH0,r_p,est1_p)$; ('\label{f3-17}')
    // Phase 1
    broadcast$(PH1,r_p,est1_p)$;
    wait until $(PH1, r_p,-)$ received from $n-t$ processes;
    if (the same estimate $v$ received from $>n/2$ processes) then 
       $est2_p$:=$v$ ('\label{f3-22}')
    else
       $est2_p$:=$\bot$ ('\label{f3-24}')
    end if;
    // Phase 2
    broadcast$(PH2,r_p,est2_p)$;
    wait until $(PH2,r_p,-)$ received from $n-t$ processes;
    let $rec_p=\{est2:$ message $(PH2,r_p,est2)$ received $\}$;
    if $((rec_p=\{v\}) \wedge (v\not=\bot))$ then 
      broadcast $(DECIDE,v)$; return($v$) end if; ('\label{f3-30}')
    if $((rec_p=\{v,\bot\}) \wedge (v\not=\bot))$ then  $est1_p$:=$v$  end if; ('\label{f3-31}')
    if $(rec_p=\{\bot\})$ then skip  end if;
  end repeat;

Task T2
  upon reception of $(DECIDE,v)$ do
    broadcast $(DECIDE,v)$; return($v$)
\end{lstlisting}
\caption{Consensus algorithm in $\mathit{HAS}[t < n/2,\HO]$
(code for process $p$).
It uses detector $D \in \HO$.
}
\label{Fig-Cons-HOmega}
\end{figure}


We present in this section two algorithms. One algorithm implements 
Consensus in $\mathit{HAS}[t < n/2,\HO]$, that is,
in an homonymous asynchronous system with reliable links, using the failure
detector $\HO$,  
and when a majority of processes are correct. 
The other algorithm implements Consensus in $\mathit{HAS}[\HO,\HS]$, that is,
in an homonymous asynchronous system with reliable links, using the failure
detector $\HO$ and $\HS$.  

\subsection{Implementing Consensus in $\mathit{HAS}[t < n/2,\HO]$}
 Let us  consider $\mathit{HAS}[t < n/2,\HO]$ where  membership is unknown,
but the number of processes is known (that is, $n$). 
Let us assume a majority of correct processes (i.e., $t < n/2$).
We say  that a process $p$  is a leader, if  it is correct  and, after some
finite  time,  $D.\leader_q=id(p)$  permanently  for each  correct  process
$q$. By definition of $\HO$, there has to be at least one leader. 

The algorithm of Figure \ref{Fig-Cons-HOmega} is derived from the algorithm
in  Figure  4 of  \cite{DBLP:conf/aina/BonnetR10},  proposed for  anonymous
systems. This algorithm has been adapted for homonymous  
systems.  The  algorithm of  Figure  \ref{Fig-Cons-HOmega}  uses a  failure
detector  of class $\HO$  (instead of  $\AO$), and  a new  initial leaders'
coordination phase has been added. The  purpose of this initial phase is to
guarantee that, after a given round,  all leaders propose the same value in
each round. 

The  algorithm  works   in  rounds,  and  it  has   four  phases  (Leaders'
Coordination Phase, Phase 0, Phase 1 and Phase 2).  
Every  process $p$ begins  the Leaders'  Coordination phase  broadcasting a
$(COORD,id(p),r,est1_p)$ message. 
If process $p$ considers itself a leader (querying the failure detector $D$
of class $\HO$), it has to wait 
 until  to   receive  $(COORD,id(p),r,est1)$  messages  sent   by  all  its
homonymous  processes (also  querying  the failure  detector  $D$ of  class
$\HO$) (Lines~\ref{f3-11a}-\ref{f3-11b}).  After that, process  $p$ updates
its estimate $est1_p$ with the minimal value  
proposed among all its homonymous.  Note that eventually all its homonymous
will be leaders too. Hence, eventually  
all leaders will also choose the same minimal value in $est1$. 

In Phase 0, if process $p$  considers itself a leader (querying the failure
detector $D$ of class $\HO$) (Line~\ref{f3-15}), 
it   broadcast   a   $(PH0,r,est1_p)$   message  with   its   estimate   in
$est1_p$. Otherwise, process $p$ has to update its $est1_p$ waiting until a
$(PH0,r,est1_l)$ message is received from  one of the leaders processes $l$
(Lines~\ref{f3-15}-\ref{f3-16}).  
Note that after the Leaders' Coordination Phase, eventually each leader $l$
broadcast $(PH0,-,est1_l)$ messages with the same value in $est1_l$. 


%

The  rest of  the algorithm  is similar  to the  algorithm in  Figure  4 of
\cite{DBLP:conf/aina/BonnetR10}.  We  omit  further  details due  to  space
restrictions. 
The following lemmas are the key  of the correctness of the algorithm. They
show that, even having multiple  leaders, these will eventually converge to
propose the same value at each round. 
 
\begin{lemma} 
\label{no-block}
No correct process blocks forever in the Leaders' Coordination Phase.
\end{lemma}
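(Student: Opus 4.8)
The plan is to exploit the Election property of $\HO$ together with the reliability of the links. First I would fix a time $\tau_0$ and an identifier $\ell \in I(\C)$ after which the failure detector has stabilized, so that for every correct process $q$ we have $D.\leader_q = \ell$ and $D.\multiplicity_q = k$, where $k = \mathit{mult}_{I(\C)}(\ell) \ge 1$ is the number of correct processes sharing the leader identifier. The wait at Lines~\ref{f3-11a}--\ref{f3-11b} is a disjunction, so I would split the correct processes into two cases according to whether they are leaders.

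For a non-leader correct process $p$ (with $id(p) \neq \ell$), I would simply observe that after $\tau_0$ the first disjunct $D.\leader_p \neq id(p)$ holds permanently; hence such a $p$ never blocks in this phase, in any round it reaches. For a leader correct process $p$ (with $id(p) = \ell$), the first disjunct eventually becomes false, so I must show that the second disjunct is satisfied: that $p$ receives $D.\multiplicity_p = k$ messages of the form $(COORD, \ell, r_p, -)$. The crucial structural observation is that the COORD broadcast at Line~\ref{f3-10} precedes the wait, so merely \emph{entering} a round $r$ suffices for a process to emit its round-$r$ COORD message. Consequently, once all $k$ correct homonyms of $\ell$ have entered round $r$, each of them has broadcast $(COORD, \ell, r, -)$; by link reliability every correct process, and in particular $p$, eventually receives all $k$ of these messages. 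Since after $\tau_0$ we have $D.\multiplicity_p = k$, the counting guard is met (additional COORD messages from faulty homonyms can only help), and $p$ leaves the phase.

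The remaining gap, and the step I expect to be the main obstacle, is establishing that all $k$ correct leaders do enter each round $r$. This cannot be argued within the Leaders' Coordination Phase in isolation, because entering round $r$ requires completing round $r-1$ in full, including Phase~0, Phase~1 and Phase~2. I would therefore carry out the argument as the coordination-phase component of a global induction on the round number $r$: assuming every correct process reaches and completes round $r-1$, all correct homonyms of $\ell$ enter round $r$ and broadcast their round-$r$ COORD messages, so the reliability-plus-multiplicity argument above shows that they all clear the coordination wait of round $r$; the companion termination arguments for the later phases (which rely on the majority assumption $t < n/2$ to satisfy the ``$n-t$ processes'' waits) then close the inductive step. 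The delicate point peculiar to this phase is that the quorum awaited is both homonym-restricted (only messages tagged with $id(p)$ count) and multiplicity-sized, so it is exactly the Election output of $\HO$ — the pair $(\ell, \mathit{mult}_{I(\C)}(\ell))$ — that makes the count $k$ attainable and never larger than the number of correct senders guaranteed to reach the round.
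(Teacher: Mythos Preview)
Your proof is correct and identifies all the key ingredients the paper uses: the leader/non-leader case split, the observation that the COORD broadcast at Line~\ref{f3-10} precedes the wait, and the eventual stability of $D.\multiplicity_p$ at the value $k = \mathit{mult}_{I(\C)}(\ell)$. The paper packages the argument differently, though: instead of a forward induction on rounds, it argues by minimal counterexample --- assuming some leader blocks forever, it takes the \emph{smallest} round $r$ at which this happens, so that by minimality every leader has passed the coordination wait of all rounds below $r$ and hence enters round $r$, where the pre-wait broadcast then supplies the $k$ needed messages. Your explicit acknowledgment that non-blocking in Phases~0--2 is required to conclude that the leaders actually \emph{reach} round $r$ is in fact more careful than the paper's own proof, which simply writes ``By definition of $r$, each leader $q$ eventually reaches round $r$'' and silently relies on the termination argument for the remaining phases (deferred to \cite{DBLP:conf/aina/BonnetR10}). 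Either packaging works; the minimal-counterexample form is shorter but hides the same inductive structure you make explicit.
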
 
\begin{proof}
The only line in  which processes can block in Lines~\ref{f3-8}-\ref{f3-13}
is in Lines~\ref{f3-11a}-\ref{f3-11b}. A correct process that is not leader
does not  block permanently  in these lines,  because eventually  the first
part of  
the wait  condition is  satisfied. Let us  assume, for  contradiction, that
some leader  blocks permanently in  Line~\ref{f3-11b}. Let us  consider the
smallest round $r$  in which some leader $p$ blocks.  By definition of $r$,
each leader  $q$ eventually reaches  round $r$, and  (even if it  blocks in
$r$)   broadcasts  $(COORD,   id(q),  r,   -)$,  where   $id(q)=id(p)$,  in
Line~\ref{f3-10}. (Observe that all processes send $(COORD,-,-,-)$ messages  
in Line~\ref{f3-10}, even  if they do not consider  themselves as leaders.)
Eventually, all these messages are delivered to $p$  
and $D.\multiplicity_p$  is permanently the  number of leaders.  Hence, the
second part  of the wait condition (Line~\ref{f3-11b})  is satisfied. Thus,
$p$ is not blocked anymore, and, therefore, we reach a contradiction. 
\end{proof}

\begin{lemma} 
\label{same-value}
There  is a  round  $r$ such  that  at every  round $r'  >  r$ all  leaders
broadcast the same value in Phase 0 of round $r'$. 
\end{lemma}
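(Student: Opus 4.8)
The plan is to combine the Election property of $\HO$ with the new Leaders' Coordination Phase to force all leaders to agree on a single value \emph{before} they broadcast in Phase~0. First I would fix a ``stabilization'' round $\hat r$. Since there are finitely many faulty processes and each attains only a finite maximum round before crashing, there is a round after which no faulty process broadcasts any message; and by the Election property of $\HO$ (together with Lemma~\ref{no-block}, and the fact that the Phase~1 and Phase~2 waits for $n-t$ messages complete because at least $n-t$ correct processes participate, so correct processes keep advancing through rounds) this $\hat r$ can be chosen so that throughout every round $r' \ge \hat r$ each correct process $p$ permanently has $\leader_p = \ell$ and $\multiplicity_p = c$, where $\ell \in I(\C)$ is the elected identifier and $c = \mathit{mult}_{I(\C)}(\ell)$. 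By definition the leaders are then exactly the $c$ correct processes with identity $\ell$.

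The key step is to show that after round $\hat r$ all leaders compute the same value at Line~\ref{f3-13}. For a round $r' \ge \hat r$, a leader $p$ has $\leader_p = id(p) = \ell$, so the first disjunct of the wait at Lines~\ref{f3-11a}--\ref{f3-11b} fails and $p$ blocks until it has received $\multiplicity_p = c$ messages of the form $(COORD, \ell, r', -)$. The crucial observation---and the point where homonymy really matters---is that in round $r' \ge \hat r$ the only processes broadcasting such messages are the $c$ leaders (faulty processes have stopped, and every process whose identity is not $\ell$ broadcasts at Line~\ref{f3-10} a $COORD$ message carrying its own identity). Because links neither duplicate nor create spurious messages, and each leader sends exactly one round-$r'$ $COORD$ message, accumulating $c$ matching messages forces $p$ to have received the $COORD$ message of \emph{every} leader. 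Hence every leader evaluates the minimum at Line~\ref{f3-13} over exactly the same multiset of $est1$-values carried by the $c$ leaders' round-$r'$ $COORD$ messages, and so all leaders leave Line~\ref{f3-13} holding one common value $M_{r'}$.

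It then remains to show that this common value survives Phase~0, i.e.\ that each leader still broadcasts $M_{r'}$ at Line~\ref{f3-17}. A leader passes the wait at Line~\ref{f3-15} via its first disjunct ($\leader_p = id(p)$), and it alters $est1_p$ at Line~\ref{f3-16} only if it has already received some $(PH0, r', v)$ message. I would prove by a well-foundedness argument that \emph{every} $PH0$ message broadcast in round $r' \ge \hat r$ carries the value $M_{r'}$: say a $PH0$ message \emph{derives from} the $PH0$ message whose value its sender adopted at Line~\ref{f3-16}. This relation is acyclic, since a message can derive only from one sent strictly earlier, so every chain bottoms out at a $PH0$ message whose sender did not adopt its value at Line~\ref{f3-16}. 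A non-leader cannot be such a bottom, because after $\hat r$ it has $\leader \ne id$ and can pass Line~\ref{f3-15} only by receiving a $PH0$ message; thus the bottom is a leader, whose value is $M_{r'}$. Propagating forward, all $PH0$ values in round $r'$, and in particular those broadcast by the leaders, equal $M_{r'}$, establishing the lemma with $r = \hat r$.

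The main obstacle is the second step: the equivalence between ``received exactly $\multiplicity_p = c$ coordination messages'' and ``heard from all leaders.'' This is precisely the guarantee that is lost when one replaces $\AO$ by $\HO$, and it rests on $c$ being the \emph{exact} count of correct homonyms of $\ell$ together with the reliability of the links; I would take particular care to argue that no message from an already-crashed process can inflate the count after $\hat r$, so that the count $c$ genuinely certifies hearing from all $c$ leaders. The Phase~0 propagation is comparatively routine once the common value $M_{r'}$ is in hand.
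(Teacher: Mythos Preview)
Your approach matches the paper's: fix a round after which all faulty processes have stopped and $\HO$ has stabilized, then observe that a leader waiting for $\multiplicity_p=c$ messages $(COORD,\ell,r',-)$ must have heard from exactly the $c$ leaders (since no faulty homonym of $\ell$ sends in round $r'$ and links neither duplicate nor forge), so every leader computes the same minimum at Line~\ref{f3-13}. The paper's proof in fact stops at this point, asserting only that leaders agree \emph{at the start} of Phase~0 (Line~\ref{f3-15}); your well-foundedness argument that every $(PH0,r',-)$ message in round $r'\ge\hat r$ traces back to a leader and hence carries $M_{r'}$ closes a gap the paper leaves implicit, namely why Line~\ref{f3-16} cannot overwrite the common value before Line~\ref{f3-17}.
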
 
\begin{proof}
Eventually all leaders  broadcast the same value because  after some round,
all leaders start Phase 0 with the same value in $est1$. 
Consider  a time  $\tau$ when  all faulty  processes have  crashed  and the
failure detector $D$ is stable (i.e.,  
$\forall \tau' \ge \tau, \forall p \in \C$, $D.\leader_p^{\tau'}=\ell$,
 being $\ell \in I(\C)$, and $D.\multiplicity_p^{\tau'}=mult_{I(C)}(\ell)$). 
Let $r$ be  the largest round reached by any process  at time $\tau$. Then,
for any round $r'>r$, all 
leaders $p$ have the same estimate $est1_p$ at the beginning of the Phase 0
of round $r'$ (Line ~\ref{f3-15}), or  there has been a decision in a round
smaller than $r'$. To prove this, let us assume that no decision is reached
in a round smaller than $r'$. 
Then, since  the leaders do  not block forever  in any round  (see previous
paragraph  1),  they execute  Line~\ref{f3-10}  in  round  $r'$. Since  the
failure detector is stable, they also  wait for the second part of the wait
condition of  Lines~\ref{f3-11a}-\ref{f3-11b} (since the first  part is not
satisfied). 
When any  leader $p$ executes the  Leaders' Coordination Phase  of $r'$, it
blocks    in     Lines~\ref{f3-11a}-\ref{f3-11b}    until    it    receives
$D.\multiplicity_p$ messages  from the other  leaders. By the  stability of
the  $\HO$ failure  detector, $D.\multiplicity_p$  is the  exact  number of
leaders. Also,  from the  definition of $\tau$  and $r$, no  faulty process
with  identifier $D.\leader_p$  is alive  and  all the  messages they  sent
correspond to rounds smaller than $r'$. 
Hence, each  leader $p$ will  wait to receive  messages from all  the other
leaders and  will set $est1_p$ to the  minimum from the same  set of values
(Line~\ref{f3-13}). 
\end{proof}


\begin{theorem}
\label{thm:consensus1}
The algorithm of Figure~\ref{Fig-Cons-HOmega} 
solves consensus in $\mathit{HAS}[t<n/2,\HO]$.
\end{theorem}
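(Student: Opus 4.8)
The plan is to prove the three consensus properties---validity, agreement, and termination---while treating Lemmas~\ref{no-block} and~\ref{same-value} as the homonymy-specific inputs and reducing everything else to the classical quorum argument of the algorithm of~\cite{DBLP:conf/aina/BonnetR10} from which Figure~\ref{Fig-Cons-HOmega} is adapted. Validity is immediate by inspection: $est1_p$ starts at the proposal $v_p$, and every assignment to it---the minimum over homonymous leaders (Line~\ref{f3-13}), the value of a received $PH0$ (Line~\ref{f3-16}), and the value $v\neq\bot$ read from $rec_p$ (Line~\ref{f3-31})---merely copies an existing estimate. Since a decision is taken only when $rec_p=\{v\}$ with $v\neq\bot$, the decided value is always some process's proposal and $\bot$ is never decided.

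For agreement I would isolate two quorum facts. Both count received messages by distinct senders (so homonymy plays no role here) and use $t<n/2$, ensuring that any two sets of size greater than $n/2$, and any two sets of size $n-t$, intersect. First, in a single round at most one non-$\bot$ value is adopted as $est2$: two processes setting $est2$ to $v_1$ and $v_2$ in Phase~1 each received their value from more than $n/2$ distinct senders, the sender sets meet, and the common sender would have broadcast two distinct $PH1$ values---a contradiction. Second, a decision lock: if some process decides $v$ in Phase~2 of round~$r$ it received $est2=v$ from a set $T$ of $n-t$ senders, and any process completing Phase~2 of round~$r$ received $PH2$ from a set $T'$ of $n-t$ senders; as $|T\cap T'|\ge n-2t>0$ it also sees $v$, so by the first fact $rec\subseteq\{v,\bot\}$ and it either decides $v$ or executes $est1_p:=v$ at Line~\ref{f3-31}. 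These give agreement inside a round, and a short induction then shows every estimate entering round~$r+1$ equals $v$, so all later Phase~2 decisions are $v$ as well; since every $DECIDE$ originates from a Phase~2 decision, Task~T2 cannot inject a conflicting value.

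Termination is the bulk of the work and is where I expect the main difficulty. I would argue by contradiction, assuming some correct process never decides; since a decision triggers a $DECIDE$ broadcast that, coming from a correct process over reliable links, reaches and terminates every correct process through Task~T2, the assumption forces that \emph{no} correct process decides at all. Under this hypothesis I would prove a progress lemma by induction on the round number: if every correct process reaches round~$r$ then every correct process completes it. The Coordination Phase is discharged by Lemma~\ref{no-block}; Phases~1 and~2 cannot block because at least $n-t$ correct processes reach round~$r$ and broadcast, meeting the $n-t$-message thresholds over reliable links (the ``more than $n/2$ identical'' test is only a branch, not a wait). The delicate point---the heart of the obstacle---is Phase~0, whose wait at Line~\ref{f3-15} creates a circular dependency in which all correct processes could sit waiting for a $PH0$ that none of them sends. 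I would break the cycle with the eventual stability of $\HO$: after the stabilisation time a correct leader permanently has $D.\leader_p=id(p)$, so it stops waiting, broadcasts $PH0$, and releases the non-leaders. With progress in hand I would then fix a round $R$ that exceeds the round of Lemma~\ref{same-value} and is reached only after both $\HO$ has stabilised and every faulty process has crashed: there all leaders send the same $v$ in Phase~0, every correct process adopts $v$ into $est1_p$, hence sets $est2_p=v$ in Phase~1 (more than $n/2$ equal values) and obtains $rec_p=\{v\}$ with $v\neq\bot$ in Phase~2, so all decide---contradicting the hypothesis. The only genuinely homonymy-specific reasoning, namely steering the several equal-identifier leaders onto one common value, is exactly what Lemmas~\ref{no-block} and~\ref{same-value} supply, which is why the remainder reduces to the standard pattern.
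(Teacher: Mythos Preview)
Your proposal is correct and follows essentially the same approach as the paper. The paper's own proof is terser---it defers the agreement argument entirely and most of the termination argument to the correctness proof of the original algorithm in~\cite{DBLP:conf/aina/BonnetR10}, whereas you spell out the underlying quorum and progress arguments directly---but the overall structure, and in particular the way Lemmas~\ref{no-block} and~\ref{same-value} are used to reduce the multi-leader situation to the single-leader one, is the same.
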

\begin{proof}
From the definition of Consensus, 
it is enough to prove the following properties.

\emph{Validity.} 
The variable  $est1$ is  initialized with a  value proposed by  its process
(Line~\ref{f3-2}). The value of  $est1$ may be updated in Lines~\ref{f3-13}
or ~\ref{f3-16} with  values of $est1$ broadcasted by  other processes. The
variable $est2$ is initialized  and updated with  $est1$ (Line~\ref{f3-22})
or  $\bot$  (Line~\ref{f3-24}). The  value  of  $est1$  may be  updated  in
Line~\ref{f3-31} with values of  $est2$ (different from $\bot$) broadcasted
by other processes.  The value decided in Line~\ref{f3-30}  is the value of
$est2$  that was  broadcasted by  some process.  As it  is not  possible to
decide the value  $\bot$ (Line~\ref{f3-30}), then the value  decided has to
be  one  of the  values  proposed by  the  processes.  Then,  the  validity
property holds.   

\emph{Agreement.} 
Identical to the agreement property of 
Figure 4 of \cite{DBLP:conf/aina/BonnetR10},

\emph{Termination.}
From Lemmas \ref{no-block} and  \ref{same-value}, after some round $r$, all
leaders  
hold the  same value  $v$ in $est1$  when they  start executing Phase  0 of
round  $r'$ (Line~\ref{f3-15}),  and  they broadcast  this  same value  $v$
(Line~\ref{f3-17}). Note that it is the same situation as having only  
one leader with value $v$ stored  in $est1$ when Phase 0 is reached. Hence,
as Phase 0 starts in the same conditions as in the algorithm of Figure 4 of
\cite{DBLP:conf/aina/BonnetR10}, the  same proof can  be used to  prove the
termination property. 
\end{proof}

\subsection{Implementing Consensus in $\mathit{HAS}[\HO,\HS]$}
Figure      \ref{Fig-Cons-HO-Sigma}       implements      Consensus      in
$\mathit{HAS}[\HO,\HS]$. Note  that it is  a variation of the  algorithm of
Figure  3 of  \cite{DBLP:conf/wdag/BonnetR10} where,  like in  the previous
case, we have  added a preliminary phase as a  barrier such that homonymous
leaders  eventually  ``agree"  in  the  same  estimation  value  $est1$  to
propose. Once  this issue has been  solved (as was proven  for the previous
algorithm), the use that this algorithm makes of the 
failure detector $\HS$ is very similar to the use the algorithm of Figure 3
of \cite{DBLP:conf/wdag/BonnetR10} makes of the $\AS$ failure detector.

\begin{figure}
\begin{lstlisting}
operation propose($v_p$):
  $est1_p$ := $v_p$; $r_p$ := $0$;
  start Tasks T1 and T2;
  
  Task T1
   repeat forever
    $r_p$:=$r_p +1$; 
    // Leaders' Coordination Phase
    broadcast $(COORD, id(p),r_p, est1_p)$;
    wait until ($D1.h\_leader_p \neq id(p)) \lor$ 
      ($D1.h\_multiplicity_p$ messages $(COORD, id(p),r_p,-)$ received);
    if (some message $(COORD, id(p),r_p,-)$ received) then 
      $est1_p$:=$\min\{est_q :id(p)=id(q) \land$ 
                      $(COORD, id(q),r_p,est_q)$ received $\}$ end if;
    // Phase 0
    wait until $(D1.h\_leader_p=id(p) \lor ((PH0, r_p,v)$ received);
    if ($(PH0, r_p,v)$ received) then $est1_p$ :=$v$ end if; 
    broadcast$(PH0,r_p,est1_p)$;
    // Phase 1
    $sr_p$:= $1$; $current\_labels_p$ := $D2.h\_labels_p$; 
    broadcast $(PH1,id(p),r_p, sr_p, current\_labels_p, est1_p)$;
    repeat             ('\label{f4-20}')
      if ($(PH2,-,r_p ,-,-, est2)$ received) then 
        $est2_p$ := $est2$; exit inner repeat loop end if;              ('\label{f4-21}')
      if $((\exists (x, mset) \in D2.h\_quora_p) \land (\exists sr \in N) \land$              ('\label{f4-22}')
         $(\exists$ set $M$ of messages $(PH1,-,r_p , sr, -,-))$, such that, 
         $(\forall (PH1,-,-,-, cl ,-) \in M, x \in cl) \land$
         $(mset=\{i:(PH1,i,-,-,-,-) \in M\}))$ then             ('\label{f4-23}')
         if (all msgs in $M$ contain the same estimate $v$) then 
             $est2_p$:=$v$ else $est2_p$:=$\bot$ end if;
         exit inner repeat loop;
      else if ($current\_labels_p \not= D2.h\_labels_p) \lor$             ('\label{f4-26}')
            $((PH1,-,r_p, sr,-,-)$ received with $sr > sr_p$) then
            $sr_p$:=$sr_p + 1$; $current\_labels_p$:=$D2.h\_labels_p$; 
            broadcast $(PH1,id(p),r_p, sr_p, current\_labels_p, est1_p)$
          end if             ('\label{f4-29}')
      end if
    end repeat;             ('\label{f4-31}')
    // Phase 2 
    $sr_p$:=$1$; $current\_labels_p$:=$D2.h\_labels_p$;  
    broadcast $(PH2,id(p),r_p, sr_p, current\_labels_p, est2_p)$;                ('\label{f4-34}')
    repeat              ('\label{f4-35}')
      if ($(COORD,-,r_p + 1,-)$ received) then 
         exit inner repeat loop end if;
      if $((\exists (x, mset) \in D2.h\_quora_p) \land (\exists sr \in N) \land$              ('\label{f4-37}')
         $(\exists$ set $M$ of messages $(PH2,-,r_p , sr, -,-))$, such that, 
         $(\forall (PH2,-,-,-, cl ,-) \in M, x \in cl) \land$
         $(mset=\{i:(PH2,i,-,-,-,-) \in M\}))$ then             ('\label{f4-38}')
         let $rec_p$ = the set of estimates contained in $M$;
         if ($(rec_p = \{v\}$) $\wedge$ ($v\not=\bot)$) then 
           broadcast $(DECIDE,v)$; return(v) end if;             ('\label{f4-40}')
         if ($(rec_p = \{v, \bot\})$ $\wedge$ ($v\not=\bot)$) then $est1_p$:=$v$ end if;
         if $(rec_p = \{\bot\}$) then skip end if;
         exit inner repeat loop
      else if $((current\_labels_p \not= D2.h\_labels_p) \lor$
             $((PH2,-,r_p, sr,-,-)$ received with $sr > sr_p))$ then
             $sr_p$:=$sr_p + 1$; $current\_labels_p$:=$D2.h\_labels_p$; 
             broadcast $(PH2,id(p),r_p, sr_p, current\_labels_p, est2_p)$              ('\label{f4-46}')
          end if
      end if
    end repeat              ('\label{f4-49}')
  end repeat;
     
  Task T2
   upon reception of $(DECIDE,v)$ do 
     broadcast $(DECIDE,v)$; return(v)
\end{lstlisting}
\caption{Consensus algorithm in $\mathit{HAS}[\HO,\HS]$ (code for process $p$). It uses detectors $D1 \in \HO$ and $D2 \in \HS$.}
\label{Fig-Cons-HO-Sigma}
\end{figure}    

\begin{lemma}
\label{no-block-repeats}
No correct process blocks forever in the repeat loops of Phases 1 and 2.
\end{lemma}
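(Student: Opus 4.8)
The plan is to prove the statement for the Phase~1 loop (Lines~\ref{f4-20}--\ref{f4-31}); the Phase~2 loop (Lines~\ref{f4-35}--\ref{f4-49}) is entirely symmetric, with the role of the $(PH2,\ldots)$ reception test at Line~\ref{f4-21} played by the $(COORD,-,r_p+1,-)$ and $(DECIDE,-)$ receptions. I would argue by strong induction on the round number $r$: assuming no correct process blocks in the Phase~1 or Phase~2 loop of any round smaller than $r$, so that (together with the non-blocking of the Leaders' Coordination Phase and Phase~0, argued exactly as in Lemma~\ref{no-block}) every correct process reaches round $r$, I show that no correct process blocks in Phase~1 of round $r$, and hence that all correct processes that do not decide pass into Phase~2 and likewise do not block.

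First I would record two stabilization facts about $D2\in\HS$. By the Monotonicity property the sets $\labels_p$ are nondecreasing, and since only finitely many distinct labels can ever appear, there is a time $T_0$ after which $D2.\labels_p$ is constant at every correct $p$; by the Liveness property there is a time $\tau^{\ast}$ after which every correct $p$ permanently holds some $(x,m)\in D2.\quora_p$ with $m\subseteq I(S(x)\cap\C)$. The engine of the argument is the subround variable $sr_p$, which a process increments only on a label change or on receipt of a strictly larger same-round subround (Lines~\ref{f4-26}--\ref{f4-29}). I would split into two cases. In the first, some correct process leaves Phase~1 of round $r$ (by passing the quorum test of Lines~\ref{f4-22}--\ref{f4-23}, or by deciding); it then broadcasts $(PH2,\ldots,r,\ldots)$ at Line~\ref{f4-34} (or $(DECIDE,-)$), and over reliable links the blocked process $p$ receives it and exits at Line~\ref{f4-21} (or decides through Task~T2). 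In the second case no correct process ever leaves Phase~1 of round $r$, so by the induction hypothesis every correct process sits in this loop forever.

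For the second case I would show the subrounds converge. After $T_0$ no label change occurs, so past that time no process \emph{initiates} a higher subround; letting $SR$ be the (finite) maximum subround reached in this loop, every correct process catches up to $SR$ through the $+1$ increments of Lines~\ref{f4-26}--\ref{f4-29} and then remains there, as neither trigger fires again. The crucial observation is that reaching the maximum $SR$ forces a process's labels to be already final: had its labels changed while at $SR$ it would have bumped past $SR$, contradicting maximality. Hence every correct process's subround-$SR$ broadcast carries its final $current\_labels$. Now take $p$'s eventual quorum $(x,m)$ guaranteed from $\tau^{\ast}$ and pick $T\subseteq S(x)\cap\C$ with $I(T)=m$; each $q\in T$ is correct and lies in $S(x)$, so by Monotonicity it has $x$ in its final labels, and its subround-$SR$ message therefore satisfies $x\in cl$. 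Taking $M$ to be exactly these messages from $T$ (all received by $p$ over reliable links) yields a witness for the test at Lines~\ref{f4-22}--\ref{f4-23} with this $(x,m)$ and $sr=SR$, so $p$ exits---contradicting that it blocks. Thus some correct process does leave Phase~1, and by the first case all of them do.

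The main obstacle is precisely this subround-matching step: the $\HS$ Liveness quorum only guarantees that the multiset $m$ is \emph{coverable} by correct members of $S(x)$, whereas the exit test at Lines~\ref{f4-22}--\ref{f4-23} demands an actually received set $M$ of $PH1$ messages sharing one subround and all carrying $x$ in their label field. Bridging the gap is exactly the maximality argument above---that the common maximal subround is attained only with final (hence $x$-containing) label sets. The Phase~2 analysis then repeats verbatim on Lines~\ref{f4-34}--\ref{f4-49}, using the test at Lines~\ref{f4-37}--\ref{f4-38} in place of Lines~\ref{f4-22}--\ref{f4-23}, and using that a process leaving Phase~2 broadcasts either $(COORD,-,r+1,-)$ upon re-entering the Leaders' Coordination Phase or $(DECIDE,-)$ at Line~\ref{f4-40}, both of which unblock a process stuck in the Phase~2 loop.
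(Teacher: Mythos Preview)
Your approach matches the paper's: contradiction at the smallest round in which a correct process blocks, the observation that one escaping process frees the rest via its $PH2$ (respectively $COORD$/$DECIDE$) broadcast, and then the $\HS$ Liveness pair $(x,m)$ together with the subround mechanism to force the quorum test at Lines~\ref{f4-22}--\ref{f4-23}. You are in fact more explicit than the paper about why a \emph{common} subround carrying the right label sets exists; the paper's proof is quite terse there.

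There is, however, one unjustified step. Your assertion that ``only finitely many distinct labels can ever appear'' does not follow from the definition of $\HS$: Monotonicity makes $\labels_q$ nondecreasing, but nothing in the specification bounds its eventual size, so your maximal subround $SR$ need not exist. The repair is easy and keeps your structure intact. Split on whether the subrounds in round~$r$ are bounded. If they are, your maximality argument goes through verbatim. If they are unbounded, note that $sr_q$ increases by exactly~$1$ at each increment (Lines~\ref{f4-26}--\ref{f4-29}), so every correct process passes through every sufficiently large subround value; choose any $sr$ that every $q\in T$ reaches only after it has acquired $x$ in $D2.\labels_q$, and the subround-$sr$ broadcasts from $T$ supply the witnessing set $M$ exactly as in your bounded case.
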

\begin{proof}
Note that if a correct  process decides (Line~\ref{f4-40}), then the claims
follows.  
Consider the repeat loop of Phase 1 (Lines~\ref{f4-20}-\ref{f4-31}). Let us
assume that some correct process is blocked forever in  
this loop.  Then, let us  consider the first  round $r$ in which  a correct
process  blocks forever  in $r$.  Hence, all  correct processes  must block
forever in the same loop in  round $r$. Otherwise some process broadcasts a
message  
$(PH2,-,r,-,-,-)$, and from Line~\ref{f4-21} no correct process would block
forever in this loop of round $r$. 
Let us consider a correct process $p$, and the pair $(x,m)$ that guarantees
the liveness property for $p$. Then, there 
is a time in which $(x,m) \in D2.\quora_p$ and every correct process $q$ in
$S(x) \cap \C$ has $x \in D2.\labels_q$. 
Note that, from Lines~\ref{f4-26}-\ref{f4-29}, every change in the variable
$D2.\labels$ of  a process creates a  new sub-round, and  that all processes
broadcast   their    current   value   of   $D2.\labels$    in   each   new
sub-round. Therefore,  
eventually, $p$  will receive  messages $(PH1,-,r,sr,cl,-)$ from  all these
processes   such    that   $x   \in   cl$.   Hence,    the   condition   of
Lines~\ref{f4-22}-\ref{f4-23} is  satisfied, and $p$ will exit  the loop of
Phase 1. The argument for the repeat loop of  
Phase 2 is verbatim.
\end{proof}

\begin{lemma}
\label{no-different-dec}
No two processes decide different values in the same round.
\end{lemma}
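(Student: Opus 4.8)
The plan is to read off both decisions from the Phase~2 quorum test (Lines~\ref{f4-37}--\ref{f4-40}) and then use the Safety property of $\HS$ to force the two witnessing message sets to share a common sender, whose estimate is unique within the round. So I would fix a round $r$ and suppose, toward the conclusion, that $p_1$ decides $v_1$ and $p_2$ decides $v_2$ in round $r$; a decision in round $r$ is taken only at Line~\ref{f4-40}. Hence for each $i\in\{1,2\}$ there exist a pair $(x_i,m_i)\in D2.\quora_{p_i}$, a sub-round $sr_i$, and a set $M_i$ of round-$r$ messages $(PH2,-,r,sr_i,cl,-)$ such that $x_i\in cl$ for every message in $M_i$, and such that $m_i=\{j:(PH2,j,-,-,-,-)\in M_i\}$. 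Moreover the decision at Line~\ref{f4-40} requires $rec_{p_i}=\{v_i\}$ with $v_i\neq\bot$, so \emph{every} message in $M_i$ carries the estimate $v_i$.

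Next I would pass from message sets to process sets and feed them into the $\HS$ safety predicate. Let $Q_i$ be the set of senders of the messages in $M_i$. Because $M_i$ fixes the single sub-round $sr_i$ and a process broadcasts at most one $(PH2,id(q),r,sr_i,\dots)$ message, each process contributes at most one message to $M_i$, so $Q_i$ is a genuine set and $I(Q_i)=m_i$. The requirement $x_i\in cl$ on every message of $M_i$ says that each $q\in Q_i$ had $x_i$ in $D2.\labels_q$ at the moment it sent its message, whence $q\in S(x_i)$, i.e. $Q_i\subseteq S(x_i)$. Applying the Safety property of $\HS$ to the pairs $(x_1,m_1)$, $(x_2,m_2)$ and the sets $Q_1,Q_2$ (for which $Q_i\subseteq S(x_i)$ and $I(Q_i)=m_i$) then yields $Q_1\cap Q_2\neq\emptyset$.

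Finally I would close the argument using the fact that a process's Phase~2 estimate is constant within a round. Pick $q\in Q_1\cap Q_2$. Its message in $M_1$ carries $v_1$ and its message in $M_2$ carries $v_2$; but $est2_q$ is assigned once, in Phase~1 of round $r$, and is only \emph{broadcast} (never reassigned) throughout the Phase~2 loop of round $r$, so all of $q$'s $(PH2,id(q),r,-,-,est2_q)$ messages carry the same value. Therefore $v_1=est2_q=v_2$, contradicting $v_1\neq v_2$. I expect the only delicate points to be the two structural facts that make the safety predicate applicable verbatim: that $Q_i$ is a set (a single process cannot supply two messages to the same fixed-sub-round $M_i$, so $I(Q_i)=m_i$ matches the form $I(Q)=m$ exactly), and that $est2_q$ cannot change between the sub-rounds in which $q$'s two messages are emitted. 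Both require a careful reading of the Phase~2 loop (Lines~\ref{f4-34}--\ref{f4-49}), since if $est2$ could be revised mid-phase the last step would collapse.
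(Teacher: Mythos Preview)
Your proposal is correct and follows essentially the same route as the paper's proof: extract the pairs $(x_i,m_i)$ and message sets $M_i$ witnessing the Line~\ref{f4-37}--\ref{f4-38} condition, pass to sender sets $Q_i\subseteq S(x_i)$ with $I(Q_i)=m_i$, invoke the Safety property of $\HS$ to get a common sender, and conclude via the invariance of $est2$ across sub-rounds in the Phase~2 loop. You are in fact slightly more explicit than the paper in justifying $I(Q_i)=m_i$ (by arguing each process contributes at most one message to a fixed-sub-round $M_i$), which the paper leaves implicit.
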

\begin{proof}
Let us assume that processes $p_1$  and $p_2$ decide values $v_1$ and $v_2$
in sub-rounds  $sr_1$ and  $sr_2$, respectively, of  the same round  $r$ (in
Line~\ref{f4-40}).   Let  $(x_1,   m_1)$   and  $M_1$   be   the  pair   in
$D2.\quora_{p_1}$ and  the set  of messages that  satisfy the  condition of
Lines~\ref{f4-37}-\ref{f4-38} for $p_1$. Since for each message 
$(PH2,-, r, sr_1, cl ,-) \in M_1$,  it holds that $x_1 \in cl$, if $Q_1$ is
the set  of senders of the messages  in $M_1$, we have  that $Q_1 \subseteq
S(x_1)$. Additionally, $m_1=\{i:(PH2,i,-,-,-,-) \in M_1\}=I(Q_1)$. 
We can define $(x_2, m_2)$ and  $M_2$ analogously for $p_2$. Then, from the
Safety Property of $\HS$, $Q_1 \cap  Q_2 \not= \emptyset$. Let $p_l \in Q_1
\cap   Q_2$.   Then,   process   $p_l$   must   have   broadcast   messages
$(PH2,id(p_l),r,sr_1,-,v_1)$        and        $(PH2,id(p_l),r,sr_2,-,v_2)$
(Lines~\ref{f4-34}  and \ref{f4-46}).  Since the  estimate  $est2_{p_l}$ of
$p_l$   does   not   change   between   sub-rounds   (inner   repeat   loop,
Lines~\ref{f4-35}-\ref{f4-49}),  it  must  hold  that $v_1=v_2$.  From  the
condition of  Line~\ref{f4-40}, $rec_{p_1}=\{v_1\}$ in  sub-round $sr_1$ and
$rec_{p_2}=\{v_2\}$ in sub-round $sr_2$,  and both processes decide the same
value. Hence, no two processes decide different values in the same round. 
\end{proof}

 \begin{theorem}
 \label{thm:consensus2}
The  algorithm   of  Figure~\ref{Fig-Cons-HO-Sigma}  solves   consensus  in
 $\mathit{HAS}[\HO,\HS]$. 
\end{theorem}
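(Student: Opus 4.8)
The plan is to verify the three defining properties of consensus---Validity, Agreement, and Termination---reusing the Leaders' Coordination Phase analysis already developed for Figure~\ref{Fig-Cons-HOmega} and the quorum machinery of Figure~3 of \cite{DBLP:conf/wdag/BonnetR10}, now driven by $\HS$ instead of $\AS$. \emph{Validity} is the routine part: $est1_p$ is initialized to the proposed value $v_p$, and every later assignment to $est1_p$ or $est2_p$ either copies an estimate carried in a received message or installs $\bot$; since a decision at Line~\ref{f4-40} requires $rec_p=\{v\}$ with $v\neq\bot$, the decided value is always one of the proposed values.

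For \emph{Agreement} I would split into the same-round and the cross-round cases. The same-round case is exactly Lemma~\ref{no-different-dec}. For the cross-round case I would prove the usual locking invariant. Suppose $p$ decides $v$ in round $r$ at Line~\ref{f4-40}, witnessed by a pair $(x,m)\in D2.\quora_p$ and a set $M$ of $PH2$ messages; its sender set $Q$ then satisfies $Q\subseteq S(x)$, $I(Q)=m$, and every message of $M$ carries $est2=v$. Two consequences of the Safety property of $\HS$ suffice. First, any $PH2$ quorum assembled in round $r$ intersects $Q$, so a common sender forces its $rec$ to contain $v$; hence no process sees $rec=\{\bot\}$ in round $r$, and every process finishing round $r$ through the quorum branch either decides $v$ or sets $est1_p$ to $v$. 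Second, two monochromatic $PH1$ quorums in a single round must intersect in a sender whose $est1$ is fixed for that round, so at most one non-$\bot$ value can be committed per round. The first process to reach round $r+1$ can only have done so through the quorum branch, hence with $est1=v$, so its $COORD$ message carries $v$. An induction on rounds, combining these two facts, shows that $v$ is the unique committable value in every round $\ge r$, so every later decision is again $v$.

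For \emph{Termination} I would first note that the Leaders' Coordination Phase and Phase~0 here are the same code as in Figure~\ref{Fig-Cons-HOmega} with $D1$ playing the role of $D$; therefore Lemmas~\ref{no-block} and~\ref{same-value} transfer verbatim. Consequently no correct process blocks in the Coordination Phase, and there is a round $r_0$ after which all leaders broadcast one common value $v$ in Phase~0, which every correct non-leader then adopts---so for every round beyond $r_0$ all correct processes enter Phase~1 carrying the same $est1=v$. Lemma~\ref{no-block-repeats} rules out blocking in the repeat loops of Phases~1 and~2. Finally, the Liveness property of $\HS$ eventually supplies at each correct process $p$ a pair $(x,m)\in D2.\quora_p$ with $m\subseteq I(S(x)\cap\C)$; since every correct process carries the same estimate $v$, the $PH1$ and then $PH2$ quorums eventually assembled around such a pair are monochromatic in $v$, producing $rec_p=\{v\}$ and a decision at Line~\ref{f4-40}. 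This is precisely the termination argument for Figure~3 of \cite{DBLP:conf/wdag/BonnetR10}, with $\AS$ replaced by $\HS$.

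The step I expect to be the main obstacle is the cross-round Agreement argument in the presence of the new Leaders' Coordination Phase, whose $\min$ update can in principle lower $est1$. One must check that, after a decision on $v$ in round $r$, the $\min$ taken by any leader in a later round cannot fall below $v$: every process that enters such a round with $est1\neq v$ must have left the previous round through the $COORD(\cdot,r_p{+}1,\cdot)$ test at the top of the Phase~2 loop and will re-adopt the leaders' value in Phase~0, while the two $\HS$-Safety observations above forbid any competing monochromatic quorum. Making this bookkeeping precise---so that the Safety property of $\HS$ fills exactly the role that the Safety property of $\AS$ plays in \cite{DBLP:conf/wdag/BonnetR10}---is the delicate point; the remaining parts follow the cited proof closely.
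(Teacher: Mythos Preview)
Your proposal is correct and follows essentially the same approach as the paper: the paper's proof simply observes that the Leaders' Coordination Phase and Phase~0 coincide with those of Figure~\ref{Fig-Cons-HOmega}, so Lemmas~\ref{no-block} and~\ref{same-value} carry over, then invokes Lemmas~\ref{no-block-repeats} and~\ref{no-different-dec} and defers the remaining termination and agreement arguments to the corresponding lemmas in~\cite{TR-DBLP:conf/wdag/BonnetR10}. You reproduce exactly this structure, but spell out more of the cross-round agreement bookkeeping that the paper leaves to the reference; in particular, your identification of the interaction between the Coordination Phase's $\min$ update and the locking invariant is a genuine subtlety that the paper does not discuss explicitly, so your treatment is in fact more careful on that point.
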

\begin{proof}
The  proof  of this  theorem  is  similar to  the  proof  of  Theorem 5  of
 \cite{TR-DBLP:conf/wdag/BonnetR10}         (full         version        of
 \cite{DBLP:conf/wdag/BonnetR10}), with the following changes. Observe that
 the Leaders' Coordination Phase and  
 Phase   0  of   the  algorithms   in  Figures   \ref{Fig-Cons-HOmega}  and
 \ref{Fig-Cons-HO-Sigma}  are the  same. Hence,  Lemmas  \ref{no-block} and
 \ref{same-value}    also    apply    to    the   algorithm    of    Figure
 \ref{Fig-Cons-HO-Sigma}.  Then,  
 the  termination  property   can  be  proven  in  a   similar  way  as  in
 \cite{TR-DBLP:conf/wdag/BonnetR10} (Lemmas  1 and 2), but  using those two
 Lemmas   \ref{no-block}   and   \ref{same-value}   together   with   Lemma
 \ref{no-block-repeats}.  The  proof  of  the agreement  property  is  also
 similar to  Lemma 3 of \cite{TR-DBLP:conf/wdag/BonnetR10}  but using Lemma
 \ref{no-different-dec}. 
\end{proof}


The algorithm of Figure~\ref{Fig-Cons-HO-Sigma} can be easily transformed into an algorithm that solves consensus in 
$\mathit{AAS}[\AO,\HS]$ (an anonymous system with detectors $\AO$ and $\HS$).
For that, given a failure detector $D3 \in \AO$, it is enough to remove the Leaders' Coordination Phase, 
and in Phase~0 to replace $(D1.h\_leader_p=id(p))$ by $(D3.a\_leader_p)$. The resulting Phase~0 is the same as Phase~1 in the algorithm of Figure 3 of \cite{DBLP:conf/wdag/BonnetR10}, and has the same properties.




\newpage
\appendix

\subsection{Reductions between Failure Detectors}
\label{appendixred-FD}

\subsubsection{From $\Sigma$ to $\HS$}

\begin{figure}
\begin{lstlisting}
Init
  $\labels_p \leftarrow \{s: (s \subseteq I(\Pi)) \wedge (id(p) \in s)\}$;			('\label{a1-l2}')
  $\quora_p \leftarrow \emptyset$;											('\label{a1-l3}')
  repeat forever
    $q \leftarrow D.trusted_p$; 												('\label{a1-l5}')
    $\quora_p  \leftarrow \quora_p \cup \{(q,q)\}$;								('\label{a1-l6}')
  end repeat; 
\end{lstlisting}		
\caption{Algorithm to transform $D \in \Sigma$ to $\HS$ with initial knowledge of membership (code for process $p$).}
\label{Fig-S-to-HS-si-mship}
\end{figure}

\begin{figure}
\begin{lstlisting}
Init
  $\labels_p \leftarrow \emptyset$;					('\label{a2-l2}')
  $\quora_p \leftarrow \emptyset$;					('\label{a2-l3}')
  $mship_p \leftarrow \emptyset$;					('\label{a2-l4}')
  start tasks T1 and T2;
Task T1
  repeat forever
    broadcast $(\mathit{IDENT}, id(p))$;				('\label{a2-l8}')
    $q \leftarrow D.trusted_p$; 						('\label{a2-l9}')
    $\quora_p  \leftarrow \quora_p \cup \{(q,q)\}$;		('\label{a2-l10}')
  end repeat;

Task T2	
  upon reception of $(\mathit{IDENT}, i)$ do
    $mship_p \leftarrow mship_p \cup \{i\}$
    $\labels_p \leftarrow \{s: (s \subseteq mship_p) \wedge (id(p) \in s)\}$;  
\end{lstlisting}		
\caption{Algorithm to transform $D \in \Sigma$ to $\HS$ without initial knowledge of membership (code for process $p$).}
\label{Fig-S-to-HS-no-mship}
\end{figure}

We prove that, if identifiers are unique, a detector of class $\HS$ can be obtained from any detector $D$ of class $\Sigma$.

\begin{theorem}
\label{HS-Sigma}
A failure detector of class $\HS$ can be obtained from any detector $D$ of class $\Sigma$ in a system with unique identifiers, under either one of the following conditions:
\begin{enumerate} 
\item without any communication if every process initially knows the membership $I(\Pi)$, or 
\item in system $AS[\Sigma]$ (the membership does not need to be known initially). 
\end{enumerate} 
\end{theorem}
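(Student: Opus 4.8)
The plan is to check, for each of the two algorithms (Figure~\ref{Fig-S-to-HS-si-mship} with membership knowledge and Figure~\ref{Fig-S-to-HS-no-mship} without it), the four properties defining $\HS$. Everything hinges on a single structural fact that I would establish at the outset: since identifiers are unique, a set of identifiers is in bijection with the set of processes carrying them. Concretely, every label $x$ ever placed in $\quora_p$ is a value read from $D.trusted_p$, so $x \subseteq I(\Pi)$; and because each process sets $\labels_p$ to the family of all subsets of its (known or discovered) membership that contain $id(p)$, I would prove that $S(x) = P(x)$ and hence $|S(x)| = |x|$ for every such label.

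With this in hand, \emph{Validity} and \emph{Monotonicity} are immediate. Every inserted pair has the form $(q,q)$, so no label coexists with two distinct multisets; $\quora_p$ only grows by union so each pair persists (take $m'=m$); and $\labels_p$ is constant in the known-membership case and grows monotonically with $mship_p$ in the unknown-membership case. I would dispose of these in a sentence each.

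For \emph{Liveness} I would invoke the liveness of $\Sigma$: for a correct $p$ there is a time after which $D.trusted_p \subseteq I(\C)$, so a pair $(q,q)$ with $q \subseteq I(\C)$ is eventually and permanently in $\quora_p$. By the bijection, each identifier of $q$ is that of a unique correct process, so $P(q) \subseteq \C$; moreover every such correct process eventually records $q \subseteq mship$ and thus has $q \in \labels$, giving $P(q) \subseteq S(q)$. Together with $S(q) \subseteq P(q)$ this yields $S(q) \cap \C = P(q)$ and $I(S(q)\cap\C) = q$, which is exactly the liveness requirement for the witness $(q,q)$.

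\emph{Safety} is the conceptual crux. Given pairs $(x_1,x_1)$, $(x_2,x_2)$ and sets $Q_i \subseteq S(x_i)$ with $I(Q_i)=x_i$, uniqueness forces $Q_i = P(x_i)$: since $Q_i \subseteq S(x_i) = P(x_i)$ and the multiset of its identifiers is the whole set $x_i$, $Q_i$ must contain exactly one process per identifier of $x_i$ (and if $S(x_i) \subsetneq P(x_i)$ no such $Q_i$ exists and the implication is vacuous). Then $Q_1 \cap Q_2 = P(x_1) \cap P(x_2)$, and as $x_1, x_2$ are both values of $D.trusted$, the safety of $\Sigma$ gives $x_1 \cap x_2 \neq \emptyset$; a shared identifier is, by uniqueness, a shared process, so $Q_1 \cap Q_2 \neq \emptyset$. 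The step I expect to demand the most care is not Safety itself but the unknown-membership case of the bijection: I must argue that a correct process eventually lists in $mship$ the identifiers of all correct processes (each broadcasts $(\mathit{IDENT},id(p))$ forever over reliable links), so that the labels $x \subseteq I(\C)$ relevant to Liveness do reach the $\labels$ sets of the processes in $P(x)$, while any faulty identifiers that leak into some $mship$ remain harmless.
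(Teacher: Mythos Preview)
Your proposal is correct and follows essentially the same approach as the paper: verify Validity and Monotonicity directly from the form $(q,q)$ and the grow-only updates, derive Liveness from the liveness of $\Sigma$ together with the eventual inclusion of $I(\C)$ in $mship_p$, and derive Safety from the intersection property of $\Sigma$ combined with uniqueness of identifiers. The only minor difference is that for Safety the paper argues slightly more directly---since $m_1$ and $m_2$ are $\Sigma$-outputs they share an identifier, and uniqueness forces the corresponding process into both $Q_1$ and $Q_2$---without routing through the characterization $Q_i = P(x_i)$; your detour via $S(x) = P(x)$ is sound (and you correctly note it may fail for labels containing faulty identifiers in the unknown-membership case, where the implication becomes vacuous), but it is not needed for Safety.
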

\begin{proof}
Let $D.trusted_p$ be the variable of $\Sigma$ failure detector $D$ at process $p$. 
Figures \ref{Fig-S-to-HS-si-mship} and \ref{Fig-S-to-HS-no-mship} present the algorithms to transform $D$ into a 
failure detector of class $\HS$ in Cases 1 and 2, respectively.
In both cases, at each process $p$ initially
$\quora_p \leftarrow \emptyset$, and infinitely often this variable is updated with the following sentences: 
$q \leftarrow D.trusted_p$, and $\quora_p  \leftarrow \quora_p \cup \{(q,q)\}$. In Case 1, initially every process $p$ sets 
$\labels_p \leftarrow \{s: (s \subseteq I(\Pi)) \wedge (id(p) \in s)\}$ and it never changes it in the run. In Case 2, every process $p$
initially sets $\labels_p \leftarrow \emptyset$, and repeatedly broadcasts a message $\mathit{IDENT}(id(p))$. Process $p$ also has a variable $mship_p$ initially set
to $mship_p \leftarrow \emptyset$. After receiving a message $\mathit{IDENT}(i)$, process $p$ updates 
$mship_p \leftarrow mship_p \cup \{i\}$, and $\labels_p \leftarrow \{s: (s \subseteq mship_p) \wedge (id(p) \in s)\}$. 

We prove now the properties of $\HS$:

\begin{itemize}
\item Validity.
Since $\quora_p$ is a set, and the elements included in it are of the form $(q,q)$ (see Line 5 in Figure \ref{Fig-S-to-HS-si-mship}, and Line 10 in Figure \ref{Fig-S-to-HS-no-mship}) there can not be two pairs with the same label.

\item Monotonicity.
The monotonicity of $\labels_p$ in Figure \ref{Fig-S-to-HS-si-mship} is obvious because it is initialized in Line 2 and never changes. With respect to Figure \ref{Fig-S-to-HS-no-mship}, $\labels_p$ is initially empty, and it is related with the set $mship_p$, such that if $mship_p$ grows then $\labels_p$ either grows or remains the same. Hence $\labels_p$ never decreases because $mship_p$ never decreases (see Line 15 in Figure \ref{Fig-S-to-HS-no-mship}). 
The monotonicity of $\quora_p$ in Figures \ref{Fig-S-to-HS-si-mship} and \ref{Fig-S-to-HS-no-mship} follows from the fact that $\quora_p$ is initially empty, and any element $(q,q)$ included in it is never removed.

\item Liveness.
Consider any correct process $p$. In Figure \ref{Fig-S-to-HS-no-mship},  eventually, $\C \subseteq mship_p$ permanently (from the exchange of $\mathit{IDENT}$ messages and Line 15 of Figure \ref{Fig-S-to-HS-no-mship}). Then, in both algorithms eventually 
$\{s: (s \subseteq I(\C)) \wedge (id(p) \in s)\}  \subseteq \labels_p $ permanently (from Line 2 in Figure \ref{Fig-S-to-HS-si-mship}, and Line 16 in Figure \ref{Fig-S-to-HS-no-mship}). Hence, there is a time $\tau$ after which, for every set $s \subseteq I(\C)$, $I(S(s)) = s$ and $S(s) \subseteq \C$.

The Liveness property of $\Sigma$ guarantees that, at some time $\tau' \ge \tau$, the variable $q$ is assigned  a set  $s$ that contains only correct processes and $(s,s)$ will be included in $\quora_p$ after that. Therefore, there is a time after which $\quora_p$ contains $(s,s)$ permanently (from monotonicity).  Since $s \subseteq I(S(s) \cap \C)=I(S(s))=s$, the property follows.
 
\item Safety.
Consider two pairs $(x_1,m_1) \in \quora_{p_1}^{\tau_1}$ and $(x_2,m_2) \in \quora_{p_2}^{\tau_2}$, for any $p_1, p_2 \in \Pi$ and any $\tau_1, \tau_2 \in N$. From the management of the $\quora$ variables (Lines \ref{a1-l3}, \ref{a1-l5}, and \ref{a1-l6} in Figure \ref{Fig-S-to-HS-si-mship}, and Lines \ref{a2-l3}, \ref{a2-l9}, and \ref{a2-l10} in Figure \ref{Fig-S-to-HS-no-mship}), we have that $m_1$ and $m_2$ are values taken from $D.trusted_{p_1}$ and $D.trusted_{p_2}$, respectively. Hence, the sets $m_1$ and $m_2$ must intersect from the Safety property of the $\Sigma$ failure detector $D$. Then, if $I(Q_1) = m_1$ and  $I(Q_2) = m_2$, given that we are in a system with unique identifiers, $Q_1$ and $Q_2$ must intersect.
\end{itemize}
\end{proof}

\subsubsection{From $\HS$ to $\Sigma$}

\begin{figure}
\begin{lstlisting}
Init
  $alive_p \leftarrow$ empty list;
  start Tasks T1 and T2;
Task T1
  repeat forever
    broadcast $(\mathit{ALIVE}, id(p))$;
  end repeat;

Task T2	
  upon reception of $(\mathit{ALIVE}, i)$ do
    if $i \in alive_p$ then move $i$ to the first position of $alive_p$
    else insert $i$ in the first position of $alive_p$
    end if;
\end{lstlisting}		
\caption{Algorithm to implement a failure detector of class $\Xi$ without initial knowledge of membership in AS[$\emptyset$] (code for process $p$).}
\label{Fig-Xi}
\end{figure}

\begin{figure*}
\begin{lstlisting}
Init
  start Tasks T1 and T2;
Task T1
  repeat forever
    broadcast $(\mathit{LABELS}, id(p), D.\labels_p)$;
    if $\exists (x,m) \in D.\quora_p: (idents_p[x]$ has been created$) \wedge (m \subseteq idents_p[x])$ then
      let $candidates_p = \{ m : ((x,m) \in D.\quora_p) \wedge (idents_p[x]$ has been created$) \wedge (m \subseteq idents_p[x])\}$;
      $trusted_p \leftarrow$ any $m \in candidates_p$ with smallest $\max_{i \in m} rank(i, X.alive_p)$;
    end if;
  end repeat;

Task T2	
  upon reception of $(\mathit{LABELS}, i,\ell)$ do
    foreach $x \in \ell$ do 
      if $idents_p[x]$ has not been created then create $idents_p[x] \leftarrow \emptyset$ end if;
      $idents_p[x] \leftarrow idents_p[x] \cup \{i\}$;
    end foreach;
\end{lstlisting}		
\caption{Algorithm to transform $D \in \HS$ to $\Sigma$ in a system with unique identifiers, but without initial knowledge of membership (code for process $p$). The algorithm uses a failure detector $X$ of class $\Xi$.}
\label{Fig-HS-to-S-no-mship}
\end{figure*}

We define now a new class of failure detector that will be used for reductions between the above failure detector classes.
While the service provided by this detector has been already used \cite{DBLP:conf/podc/Zielinski08,DBLP:conf/wdag/BonnetR10}, it was never formally
defined. The new failure detector class, denoted $\Xi$, will only be defined for systems with unique identifiers, i.e., non homonymous.

\begin{definition}
A failure detector of class $\Xi$ provides each process  $p \in \Pi$, in a system with unique process identifiers, with a variable $alive_p$ which contains a (sorted) list of process identifiers. 
Any failure detector of class $\Xi$ must satisfy the following property:
\begin{itemize}
 \item Liveness.
 Eventually, the identifiers of the correct processes 
 are permanently in the first positions of $alive_p$. More formally,
 let $rank(i, alive_p^{\tau})$ denote the position (starting from $1$) of process identifier $i$ in $alive_p^{\tau}$ 
 (with $rank(i, alive_p^{\tau})=\infty$ if $i \notin alive_p^{\tau}$). Then,
 $\forall p \in \C , \exists \tau \in N: \forall \tau' \ge \tau, \forall q \in \C, rank(id(q), alive)_p^{\tau'} \leq |\C|.$
 \end{itemize}
\end{definition}

Observe that 
the position of the same identifier can be different at different processes, and can vary over time in the same process.
From the algorithm of Figure~\ref{Fig-Xi}, we obtain the following lemma.

\begin{lemma}
\label{l-Xi}
A failure detector of class $\Xi$ can be implemented in AS[$\emptyset$] (an asynchronous system with unique identifiers), even when the membership is not known initially.
\end{lemma}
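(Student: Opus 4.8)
The plan is to verify the single Liveness property of class $\Xi$ for the move-to-front algorithm of Figure~\ref{Fig-Xi}, exploiting the facts that identifiers are unique and that correct processes broadcast $\mathit{ALIVE}$ messages forever while faulty ones send only finitely many. Fix a correct process $p$. First I would isolate a time $\tau_0$ after which $p$ never again processes an $\mathit{ALIVE}$ message originating from a faulty process: each faulty process crashes after sending finitely many such messages, all of which are eventually delivered over the reliable links, so only finitely many faulty-originated messages are ever received by $p$, and $\tau_0$ can be taken right after the last of them is handled in Task~T2. Consequently, every \emph{move-to-front} operation performed after $\tau_0$ relocates the identifier of a correct process, and no faulty identifier is ever touched again.

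The key observation is an invariant about relative order after $\tau_0$: once a correct identifier $id(q)$ has been moved to the head of $alive_p$ at some time $t_q \ge \tau_0$, it remains ahead of every faulty identifier forever. This holds because a faulty identifier, never being moved after $\tau_0$, can only keep or lose ground, whereas moving any other correct identifier to the front either leaves $id(q)$ untouched or pushes it back by exactly one position, which a short case analysis on the move-to-front rule shows can never let it fall behind a (stationary) faulty identifier. Since every correct process broadcasts infinitely often and $p$ receives all of these messages, each correct $id(q)$ is indeed moved to the front at some finite time $t_q \ge \tau_0$; letting $\tau_2 = \max_{q \in \C} t_q$ (a finite maximum, as $\C$ is finite) gives a time after which all correct identifiers simultaneously sit ahead of all faulty identifiers and are all present in the list.

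Finally I would convert this ordering statement into the required rank bound. Because identifiers are unique, the correct processes contribute exactly $|\C|$ distinct identifiers, and the list holds no duplicates; after $\tau_2$ these occupy precisely the first $|\C|$ positions of $alive_p$, so $rank(id(q), alive_p^{\tau'}) \le |\C|$ for every $q \in \C$ and every $\tau' \ge \tau_2$, which is exactly the Liveness property. I expect the main obstacle to be the ordering invariant of the second paragraph: asynchrony forces one to argue carefully that after $\tau_0$ a faulty identifier is genuinely ``frozen'' (its last message has already been consumed) and then to track, through the move-to-front rule, that a correct identifier placed at the head can never be overtaken by a frozen one. The uniqueness of identifiers is essential in the last step and should be stated explicitly, since it is what turns ``ahead of all faulty identifiers'' into the numerical bound $|\C|$; note also that membership is never used by the algorithm, so the ``without initial knowledge'' clause requires no extra work.
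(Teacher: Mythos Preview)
Your proposal is correct and follows essentially the same approach as the paper: identify a time after which no $\mathit{ALIVE}$ message from a faulty process is processed, observe that thereafter faulty identifiers are ``frozen'' while each correct identifier is eventually moved to the front, and conclude that correct identifiers end up permanently ahead of faulty ones. The paper phrases the last step as a pairwise rank comparison ($rank(id(q),alive_p)<rank(id(r),alive_p)$ for each correct $q$ and faulty $r$) rather than via your explicit move-to-front invariant, but the two formulations are interchangeable and yield the same conclusion.
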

\begin{proof}
For each process $q \in \C$, eventually some message $\mathit{ALIVE}(id(q))$ will be received at each process $p \in \C$. Then $id(q)$ will be included in $alive_p$ and never removed after that. Given any faulty process $r$, $p$ will stop receiving messages from $r$ by some time $\tau$. Then, after $\tau$ process $p$ will never receive a message $\mathit{ALIVE}(id(r))$ and $id(r)$ will never be moved to (inserted in) the first position of $alive_p$. However, after $\tau$, eventually $p$ will receive messages $\mathit{ALIVE}(id(q))$ from each process $q \in \C$, and each identifier $id(q)$ will be moved to (or inserted in) the first position of $alive_p$. Then, there is some time $\tau' > \tau$ such that, at all times $\tau'' > \tau'$, $rank(id(q), alive_p^{\tau''}) < rank(id(r), alive_p^{\tau''})$. Since this holds for all $p,q \in \C$ and all $r \notin \C$, the claim follows.
\end{proof}

We now show, using the algorithm of  Figure~\ref{Fig-HS-to-S-no-mship}, that $\Sigma$ can be obtained from $\HS$ without initial knowledge of the membership. 

\begin{theorem}
\label{Sigma-HS}
A failure detector of class $\Sigma$ can be obtained from any detector $D$ of class $\HS$ in AS[$\HS$] (an asynchronous system with unique identifiers), even when the membership is not known initially.
\end{theorem}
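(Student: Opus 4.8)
The plan is to verify that the variable $trusted_p$ produced by the algorithm of Figure~\ref{Fig-HS-to-S-no-mship} satisfies the two defining properties of $\Sigma$ (Safety and Liveness), using Lemma~\ref{l-Xi} to supply the auxiliary detector $X$ of class $\Xi$. The single observation that makes everything go through is that, because identifiers are unique, the set $idents_p[x]$ faithfully records identifiers of members of $S(x)$: whenever $p$ inserts $i$ into $idents_p[x]$ it has received a $(\mathit{LABELS},i,\ell)$ message with $x\in\ell$, so the unique process with identifier $i$ had $x$ in its $D.\labels$ at broadcast time and therefore belongs to $S(x)$. Hence $idents_p[x]\subseteq I(S(x))$ at all times, and since identifiers are unique, any multiset $m\subseteq idents_p[x]$ is really a set, and there is a unique process set $Q\subseteq S(x)$ with $I(Q)=m$.

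For Safety I would argue directly from the guard that gates every assignment to $trusted_p$. Suppose $trusted_p=m_1$ was installed from a pair $(x_1,m_1)\in D.\quora_p$ with $m_1\subseteq idents_p[x_1]$, and $trusted_q=m_2$ from $(x_2,m_2)\in D.\quora_q$ with $m_2\subseteq idents_q[x_2]$, at arbitrary and possibly different times. By the observation above there are process sets $Q_1\subseteq S(x_1)$ and $Q_2\subseteq S(x_2)$ with $I(Q_1)=m_1$ and $I(Q_2)=m_2$. The Safety property of $\HS$ then yields $Q_1\cap Q_2\neq\emptyset$; taking $r\in Q_1\cap Q_2$ gives $id(r)\in m_1\cap m_2$, so $trusted_p\cap trusted_q\neq\emptyset$. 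Since this uses only the guard $m\subseteq idents_p[x]$, it applies to every pair of installed values.

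For Liveness, fix a correct $p$. By the Liveness of $\HS$ there is a time after which $D.\quora_p$ always contains some pair $(x,m)$ with $m\subseteq I(S(x)\cap\C)$; fix such a pair at that time. By Validity and Monotonicity the label $x$ persists forever afterward, always carrying a multiset $m'\subseteq m\subseteq I(\C)$, so $x$ is a permanently ``all-correct'' label. Each correct process whose identifier lies in $m$ has $x$ permanently in its $D.\labels$ (Monotonicity of $\labels$) and broadcasts forever, so $p$ eventually accumulates $m\supseteq m'$ into $idents_p[x]$; from then on $(x,m')$ is permanently a valid candidate all of whose identifiers are correct. Meanwhile, by Lemma~\ref{l-Xi} and the Liveness of $\Xi$, there is a time after which every correct identifier has rank $\le|\C|$ in $X.alive_p$ while every faulty identifier has rank $>|\C|$. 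After both times, the all-correct candidate has $\max_{i\in m'}rank(i,X.alive_p)\le|\C|$, whereas any candidate containing a faulty identifier has maximum rank $>|\C|$; hence the candidate selected by the algorithm (minimum maximum rank) has maximum rank $\le|\C|$ and so contains only correct identifiers, giving $trusted_p\subseteq I(\C)$.

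The main obstacle I anticipate is the Liveness step, and specifically showing that an all-correct candidate is not merely present in $D.\quora_p$ but is actually \emph{available} to $p$ as a candidate, i.e. that $idents_p[x]$ catches up to $m$. This requires combining three facts carefully: Monotonicity of $\labels$ (so the relevant correct processes keep advertising $x$), persistence of the good label $x$ in $\quora_p$ with a shrinking-but-still-correct multiset, and reliable delivery of the repeated $\mathit{LABELS}$ broadcasts. Once availability is secured, the rank-based tie-break against $X$ is what actually excludes faulty identifiers, so the correctness of the selection rule hinges entirely on the clean separation (rank $\le|\C|$ versus $>|\C|$) guaranteed by $\Xi$.
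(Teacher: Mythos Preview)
Your proposal is correct and follows essentially the same approach as the paper: both use the $\Xi$ detector from Lemma~\ref{l-Xi}, derive Safety from the guard $m\subseteq idents_p[x]$ (which forces $m\subseteq I(S(x))$ and hence lets the Safety of $\HS$ apply), and derive Liveness by combining the Liveness of $\HS$ with the rank-separation of $\Xi$ to ensure the minimum-max-rank selection picks an all-correct set. Your write-up is in fact more careful than the paper's own sketch on the one point you flagged---showing that an all-correct pair actually becomes a \emph{candidate} (via Monotonicity of $\labels$ and repeated $\mathit{LABELS}$ broadcasts)---which the paper asserts without spelling out.
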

\begin{proof}
From Lemma~\ref{l-Xi}, we can have a failure detector of class $\Xi$ in an asynchronous system. The logic of the algorithm of Figure~\ref{Fig-HS-to-S-no-mship} is somewhat similar to that of the algorithm in Figure 2 in \cite{DBLP:conf/wdag/BonnetR10}. The condition in Line~6 guarantees that the variable $trusted _p$ is assigned a set of identifiers $m$ only if $(x,m)$ is in $\quora_p$, and every process $q$ whose identifier is in $m$ has $x$ in its set $\labels_q$ (from the management of the sets $idents_p$). Combining this condition with the safety property of $\HS$ we guarantee the safety property of $\Sigma$. The liveness property of $\Sigma$ holds from the liveness property of $\HS$, the choice of $m$ done in Line~8, and the properties of the failure detector class $\Xi$ as follows. If $p \in \C$, from the liveness of $\HS$, eventually every time Line~8 is executed, there is some $m \in candidates_p$ with only correct processes. If the failure detector $X$ of class $\Xi$ has already
all the correct processes in the lowest ranks of $X.alive_p$ (which eventually happens from its liveness property), then any set $m$ in $candidates_p$, whose largest rank in $X.alive_p$ is minimal, contains only correct processes (which yields the liveness of $\Sigma$).
\end{proof}

\tightparagraph{Theorem~\ref{thm:equivalent}}
\emph{Failure detector classes $\Sigma$, $\HS$, and $\AS$ are equivalent in $AS[\emptyset]$. Furthermore,
the transformation between $\Sigma$ and $\HS$ do not require initial knowledge of the membership.}

\tightparagraph{Proof of Theorem~\ref{thm:equivalent}}
From Theorems~\ref{HS-Sigma} and~\ref{Sigma-HS} we have that $\Sigma$ and $\HS$ are equivalent.  The equivalence between $\Sigma$ and $\AS$ was shown in \cite{DBLP:conf/wdag/BonnetR10}.

\subsubsection{From $\AS$ to $\HS$}

We show now how to obtain a failure detector of class $\HS$ from a detector of class $\AS$.

\tightparagraph{Theorem~\ref{thm:astohs}}
\emph{Class $\HS$ can be obtained from class $\AS$ in $AAS[\emptyset]$ without communication.}

\tightparagraph{Proof of Theorem~\ref{thm:astohs}}
Let $D$ be a detector of class $\AS$.
The transformation can be done as follows. Let $\bot$ be the ``default" identifier. Let us denote with $\bot^r$ a multiset of $r$ identifiers $\bot$. Each process $p$ periodically does as follows. For each pair $(x,y) \in D.a\_sigma_p$, the label $x$ is included in $\labels_p$ and the pair $(x, \bot^y)$ is included in $\quora_p$ (replacing any pair $(x,-)$ that $\quora_p$ may contain). The properties of $\HS$ follow trivially from the properties of $\AS$.

\subsubsection{From $\NAP$ to $\HP$ and $\HS$}

We show here how failure detectors of the classes $\HP$ and $\HS$ can be obtained for a failure detector of class $\NAP$ without communication.

\begin{lemma}
A failure detector of class $\HP$ can be obtained from any detector $D$ of class $\NAP$ in $AAS[\emptyset]$ (an anonymous asynchronous system) without communication.
\end{lemma}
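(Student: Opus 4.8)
The plan is to exploit the fact that in $AAS[\emptyset]$ every process carries the same default identifier $\bot$, so that the multiset $I(\C)$ that $\HP$ must eventually output is completely determined by its cardinality: with the notation introduced earlier, $I(\C) = \bot^{|\C|}$. Since a $\NAP$ detector eventually delivers exactly the integer $|\C|$ at every correct process, the entire transformation reduces to turning this integer into the corresponding multiset of $\bot$'s, an operation that is purely local and hence needs no communication.

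Concretely, I would have each process $p$ repeatedly read the current value $D.anap_p$ of its $\NAP$ detector and assign $\trusted_p \leftarrow \bot^{D.anap_p}$, i.e.\ the multiset consisting of $D.anap_p$ copies of $\bot$. No messages are ever sent, so the reduction is indeed \emph{without communication}, as required. To establish the Liveness property of $\HP$ (which is its only requirement), I would invoke the Liveness of $\NAP$: there is a time $\tau$ such that, for every $p \in \C$ and every $\tau' \ge \tau$, $anap_p^{\tau'} = |\C|$. Because $p$ is correct it executes the assignment infinitely often; let $\tau''\ge\tau$ be the first time it does so after $\tau$. From $\tau''$ onward every assignment sets $\trusted_p = \bot^{|\C|} = I(\C)$, and $\trusted_p$ is modified only by this assignment, so $\trusted_p^{\tau'} = I(\C)$ for all $\tau' \ge \tau''$. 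This is exactly the Liveness property of $\HP$. Note that the Safety property of $\NAP$ is not even used, since $\HP$ imposes no safety-type constraint.

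The only step deserving attention is conceptual rather than technical: the observation that anonymity collapses the multiset output $I(\C)$ to the single number $|\C|$, so that the integer supplied by $\NAP$ already encodes precisely the information $\HP$ must produce. Once this is noticed there is no genuine obstacle, and correctness follows in one line from the eventual tightness guaranteed by $\NAP$. I would expect the companion claim for $\HS$ (in the same subsection) to be proved by an equally direct local conversion of the stabilized count $|\C|$ into the required quorum pairs over $\bot$-multisets.
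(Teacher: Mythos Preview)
Your proposal is correct and matches the paper's own proof essentially line for line: the paper likewise has each process periodically set $\trusted_p$ to the multiset of $D.anap_p$ copies of $\bot$ and invokes the Liveness of $\NAP$ to conclude the Liveness of $\HP$. Your additional remarks (that Safety of $\NAP$ is unused, and that the $\HS$ reduction proceeds by a similarly local conversion into $\bot$-multiset quorum pairs) are also accurate.
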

\begin{proof}
The transformation can be done as follows. Let $\bot$ be the ``default" identifier. Each process $p$ 
periodically updates $h\_trusted_p$ to a multiset of $D.anap_p$ identifiers $\bot$. The liveness property of $D$ guarantees the liveness property of $\HP$.
\end{proof}

\begin{lemma}
A failure detector of class $\HS$ can be obtained from any detector $D$ of class $\NAP$ in $AAS[\emptyset]$ (an anonymous asynchronous system) without communication.
\end{lemma}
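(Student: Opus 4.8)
The plan is to give a direct, communication-free transformation, in the same spirit as the $\HP$-from-$\NAP$ reduction and the $\AS$-to-$\HS$ reduction of Theorem~\ref{thm:astohs}. Each process $p$ repeatedly reads the integer $k = D.anap_p$ supplied by the underlying detector $D \in \NAP$, inserts the label $k$ into $\labels_p$, and inserts the pair $(k, \bot^k)$ into $\quora_p$, where (as in the proof of Theorem~\ref{thm:astohs}) $\bot^k$ denotes the multiset of $k$ copies of the default identifier $\bot$. Because a label $k$ is always paired with the fixed multiset $\bot^k$ and pairs/labels are only ever added, \emph{Validity} and both clauses of \emph{Monotonicity} follow immediately, exactly as for the $\HS$ implementation of Figure~\ref{Fig-HS}.

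The technical heart is a combinatorial fact about the sets $S(k) = \{p \in \Pi : \exists \tau \in N, k \in \labels_p^{\tau}\}$. I would write $A(\tau)$ for the set of processes alive at time $\tau$; this set is non-increasing in $\tau$ and always contains $\C$. The \emph{Safety} of $\NAP$ gives $anap_p^{\tau} \ge |A(\tau)|$, so a process can read the value $k$ only at a time $\tau$ with $|A(\tau)| \le k$. Letting $\tau_k$ be the first such time, monotonicity of $A$ forces every reading of $k$ to occur at a time $\ge \tau_k$, whence $S(k) \subseteq A(\tau_k)$ and $|S(k)| \le |A(\tau_k)| \le k$. Since we are in $AAS[\emptyset]$, the constraint $I(Q) = \bot^k$ is just $|Q| = k$, so the only set $Q \subseteq S(k)$ admissible as a quorum witness is $Q = S(k) = A(\tau_k)$, which then has exactly $k$ elements.

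Given this, \emph{Safety} becomes automatic: for two quora $(k_1, \bot^{k_1})$ and $(k_2, \bot^{k_2})$ the only admissible witnesses are $Q_1 = A(\tau_{k_1})$ and $Q_2 = A(\tau_{k_2})$, which are nested because $A$ is non-increasing (the one with the later threshold time is contained in the other) and both contain $\C \neq \emptyset$, so $Q_1 \cap Q_2 \neq \emptyset$. For \emph{Liveness} I would use the sharper observation that reading $anap_p^{\tau} = |\C|$ forces $A(\tau) = \C$ (from $|\C| \le |A(\tau)| \le anap_p^{\tau} = |\C|$), which can only happen once all faulty processes have crashed; hence no faulty process ever reads $|\C|$, while every correct process eventually reads $|\C|$ permanently by the Liveness of $\NAP$. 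Therefore $S(|\C|) = \C$, and the pair $(|\C|, \bot^{|\C|})$ is eventually and permanently in $\quora_p$ of every correct $p$, with $\bot^{|\C|} = I(\C) = I(S(|\C|) \cap \C)$, which is exactly what Liveness demands.

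I expect the main obstacle to be the safety argument, and within it the non-obvious inequality $|S(k)| \le k$ derived from the $\NAP$ safety bound; this is what collapses the ``for all $k$-subsets of $S(x)$'' quantifier into the single set $A(\tau_k)$, after which the monotone structure of the alive sets makes quorum intersection trivial. The only remaining care is bookkeeping: confirming that the anonymous reading ``identity multiset equals $\bot^k$'' coincides with ``cardinality equals $k$'', and that the standing assumption $\C \neq \emptyset$ guarantees the nested witnesses are nonempty.
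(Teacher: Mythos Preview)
Your proposal is correct and follows essentially the same transformation as the paper: each process turns the current $\NAP$ output $k$ into a label and the pair $(k,\bot^k)$, with Validity/Monotonicity immediate and Liveness obtained from the observation that only correct processes can ever read $|\C|$. Your Safety argument via the alive sets $A(\tau_k)$ is a more explicit version of the paper's terse one-line claim that the witness sets are nested; both rest on the same inequality $|S(k)|\le k$ coming from the $\NAP$ safety bound, so the approaches coincide.
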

\begin{proof}
The transformation can be done as follows. Let $\bot$ be the ``default" identifier. Let us denote with $\bot^r$ a multiset of $r$ identifiers $\bot$. 
Each process $p$ periodically does as follows. After obtaining a value $y$ from $D.anap_p$, the label $\bot^y$ is included in $\labels_p$ and the pair $(\bot^y, \bot^y)$ is included in $\quora_p$. The Validity and Monotonicity of $\HS$ hold trivially. Liveness follows since, from the safety of $\NAP$, only correct processes see an output of $D.anap=c=|\C|$, and from the liveness property all of them do it. Then, every correct process $p$ eventually inserts $\bot^c$ in $\labels_p$ and $(\bot^c, \bot^c)$ in $\quora_p$, and only those processes. Safety of $\HS$ comes from the safety property of $\NAP$: if, for any $y$ and $y'$ with $y \geq y'$, $|S(\bot^y)| = y$ and $|S(\bot^{y'})|=y'$ (none can be larger), then $S(\bot^y) \subseteq S(\bot^{y'})$.
\end{proof}

\tightparagraph{Theorem~\ref{thm:nap}}
\emph{Classes $\HP$ and $\HS$ can be obtained from class $\NAP$ in $AAS[\emptyset]$ without communication.}

\tightparagraph{Proof of Theorem~\ref{thm:nap}}
The proof of Theorem~\ref{thm:nap} follows from the two previous lemmas.

%

\end{document}